%
\documentclass[runningheads]{llncs}
\usepackage[T1]{fontenc}
%
\usepackage{graphicx}

\usepackage{hyperref}
%
\usepackage{color}

\urlstyle{rm}

\usepackage{comment}

\usepackage{amsmath,amssymb} 
\usepackage{stmaryrd}  

\usepackage{tikz}
\usepackage{pgf} 
\usetikzlibrary{arrows.meta,patterns,automata,arrows,shapes,snakes,topaths,trees,backgrounds,positioning,through,calc}

\newcommand{\comp}{\vartriangleleft}
\newcommand{\compflip}{\vartriangleright}

\usepackage{xcolor}
\definecolor{darkgreen}{rgb}{0.0, 0.7, 0.0}

\newtheorem{fact}{Fact}

\begin{document}
\title{Vagueness and the Connectives}
%
%
\author{Wesley H. Holliday\orcidID{0000-0001-6054-9052}}
\authorrunning{Wesley H. Holliday}
%
\institute{University of California, Berkeley}
\maketitle              
\begin{abstract}
Challenges to classical logic have emerged from several sources. According to recent work \cite{Holliday-Mandelkern2022}, the behavior of \textit{epistemic modals} in natural language motivates weakening classical logic to orthologic, a logic originally discovered by Birkhoff and von Neumann \cite{Birkhoff1936} in the study of quantum mechanics. In this paper, we consider a different tradition of thinking that the behavior of \textit{vague predicates} in natural language motivates weakening classical logic to intuitionistic logic or even giving up some intuitionistic principles. We focus in particular on Fine's recent approach to vagueness \cite{Fine2020}. Our main question is: what is a natural non-classical base logic to which to retreat in light of both the non-classicality emerging from epistemic modals and the non-classicality emerging from vagueness? We first consider whether orthologic itself might be the answer. We then discuss whether accommodating the non-classicality emerging from epistemic modals and vagueness might point in the direction of a weaker system of fundamental logic \cite{Holliday2023}.

\keywords{Vagueness  \and non-classical logic \and orthologic \and intuitionistic logic \and compatibility logic \and fundamental logic.}
\end{abstract}
\section{Introduction}\label{Intro}

At the heart of one formulation of the Sorites Paradox is the classical inconsistency of
\begin{equation}
\underset{\mbox{\textsf{extremes}}}{\underbrace{p_1\wedge\neg p_n}}\wedge \underset{\mbox{\textsf{no sharp cutoff}}}{\underbrace{\neg (p_1\wedge\neg p_2)\wedge  \neg (p_2\wedge\neg p_3)\wedge\dots\wedge \neg (p_{n-1}\wedge\neg p_n)}}.\label{SoritesForm}
\end{equation}
For example, for a vague predicate like `young', we would like to say that a person who has been alive for $1$ second is young and that a person who has been alive for 3,000,000,000 seconds\footnote{Approximately 95 years.} is not young, but there is no exact $k$ such that someone who has been alive for $k$ seconds is young while someone who has been alive for $k+1$ seconds is not young. Yet the combination of these claims as in (\ref{SoritesForm}) is inconsistent in classical logic. 

In fact, (\ref{SoritesForm}) is even inconsistent in \textit{intuitionistic} logic.\footnote{This is easy to see using the standard preorder semantics for intuitionistic logic \cite{Dummett1959,Grzegorczyk1964,Kripke1965}: $\neg (p_{k}\wedge\neg p_{k+1})$ is forced at a state $x$ iff for every successor $x'$ of $x$ that forces $p_k$, there is a successor $x''$ of $x'$ that forces $p_{k+1}$, which is a successor of $x$ by transitivity. Thus, if $x$ forces $p_1$, then applying the previous reasoning to each conjunct of \textsf{no sharp cutoff}, it follows that there is a successor of $x$ that forces $p_n$, so $\neg p_n$ cannot also be forced at~$x$.} Those who have thought that the Sorites Paradox motivates moving from classical to intuitionistic logic (e.g., \cite{Wright2001,Bobzien2020}) are apparently satisfied that while \[\textsf{extremes}\vdash_\mathsf{Int}\neg \textsf{no sharp cutoffs},\] at least $\neg \textsf{no sharp cutoff}$ does not intuitionistically entail that there is some sharp cutoff:
\[\neg \textsf{no sharp cutoff}\nvdash_\mathsf{Int}  \underset{\mbox{\textsf{some sharp cutoff}}}{\underbrace{(p_1\wedge\neg p_2)\vee\dots\vee (p_{n-1}\wedge\neg p_n)}}.\]
Yet this subtle position does not satisfy those who would like to make the assertions after (\ref{SoritesForm}) above.

By contrast, Fine \cite{Fine2020} develops an approach to the Sorites that renders (\ref{SoritesForm}) \textit{consistent}. In a fragment of his language with just $\wedge$, $\vee$, and $\neg$, Fine's logic is the logic of bounded distributive lattices equipped with a so-called weak pseudocomplementation (Proposition \ref{FineRep} below).\footnote{A unary operation $\neg$ on a bounded lattice $L$ is a \textit{weak pseudocomplementation} if for all elements $a,b$ of $L$, $a\wedge \neg a=0$, $a\leq \neg\neg a$, and $a\leq b$ implies $\neg b\leq \neg a$. Equivalently, for all elements $a,b$ of $L$, $a\wedge\neg a=0$, and $a\leq \neg b$ implies $b\leq \neg a$.} Over such algebras, (\ref{SoritesForm}) is satisfiable. However, requiring these algebras to also satisfy the law of excluded middle, $a\vee\neg a=1$, turns them into Boolean algebras and hence yields classical logic, rendering (\ref{SoritesForm}) inconsistent. Thus, the rejection of the law of excluded middle is essential to Fine's approach to the Sorites, as it was to Field's \cite{Field2003}.

While Fine's logic of vagueness includes the distributivity of $\wedge$ and $\vee$, Holliday and Mandelkern \cite{Holliday-Mandelkern2022} have recently argued that distributivity is not generally valid, using examples involving the epistemic modals `might' and `must'. They specifically argue for the adoption of \textit{orthologic}, a logic originally discovered by Birkhoff and von Neumann \cite{Birkhoff1936} in the study of quantum mechanics, as the underlying logic for $\wedge$, $\vee$, and $\neg$ in a language with epistemic modals. However, they do not consider whether vagueness might motivate weakening orthologic still further, e.g., by dropping the law excluded middle---which is included in orthologic---when admitting vague predicates into the language. This is the question we wish to address in this paper: what is a natural non-classical base logic to which to retreat in light of both the non-classicality emerging from epistemic modals and the non-classicality emerging from vagueness?

The first candidate we will consider is simply orthologic itself. As we will see, like Fine's logic, orthologic renders (\ref{SoritesForm}) consistent and does so with quite intelligible semantic models,  but it does so thanks to the denial of distributivity rather than excluded middle; though $(p_1\vee\neg p_1)\wedge\dots\wedge (p_n\vee\neg p_n)$ is an orthological theorem applicable to the Sorites Paradox, orthologic does not permit the distributive inference from that conjunction of disjunctions to the disjunction of all conjunctions of the form $\pm p_1\wedge \dots \wedge \pm p_n$, where $\pm$ is empty or $\neg$, which with $p_1\wedge\neg p_n$ and distributivity implies \textsf{some sharp cutoff}. This may seem like an attractively parsimonious solution: perhaps weakening classical logic to orthologic is sufficient to handle both epistemic modals and vagueness. However, there are further desiderata for the logic of vagueness beyond making (\ref{SoritesForm}) consistent, and perhaps these further desiderata motivate giving up excluded middle after all.

This brings us to another candidate, the system we call \textit{fundamental logic}~\cite{Holliday2023}. Fundamental logic weakens orthologic precisely by dropping the principle of double negation elimination that gives rise to excluded middle, as the  phenomena involving vagueness might motivate us to do, and weakens Fine's logic precisely by dropping distributivity, as the phenomena involving epistemic modals might motivate us to do. In \cite{Holliday2023}, fundamental logic is defined in terms of a Fitch-style natural deduction system including only introduction and elimination rules for the connectives $\wedge$, $\vee$, and $\neg$. In particular, this natural deduction system does not include the rule of Reductio Ad Absurdum (if the assumption of $\neg\varphi$ leads to a contradiction, then conclude $\varphi$) or the rule that Fitch called Reiteration (which allows pulling certain previously derived formulas into subproofs). Modulo the introduction and elimination rules, Reductio Ad Absurdum is the culprit behind double negation elimination and excluded middle, while Reiteration is the culprit behind distributivity. By including only the introduction and elimination rules for the connectives, fundamental logic avoids those principles and the associated problems involving vague predicates and epistemic modals.

The rest of the paper is organized as follows. In \S~\ref{LogicsSection}, we define the logics that we will consider as candidate logics for a language with $\wedge$, $\vee$, $\neg$, and vague predicates. In \S~\ref{FixpointSemantics}, we briefly review a general relational semantics for orthologic, fundamental logic, intuitionistic logic, and classical logic. In \S~\ref{FineSemantics}, we review a different semantics that Fine uses for the purposes of his logic. We then turn to \textit{models} of the Sorites series: first a model appropriate for orthologic or Fine's logic (depending on the semantics applied to the model) in \S~\ref{SymmSorites} and then a model appropriate for fundamental logic in \S~\ref{PseudosymmSorites}. These models show that orthologic and fundamental logic, which can handle the non-distributivity arising from epistemic modals, can also handle the Sorites series at least in some minimal sense. In \S~\ref{Discussion}, we discuss whether we should favor an orthological approach to the Sorites or a fundamental one. We conclude in \S~\ref{Conclusion} with some open questions.

\section{Logics}\label{LogicsSection}

We will focus on the logical connectives of negation, conjunction, and disjunction, using the following formal language.

\begin{definition}\label{LangDef} \textnormal{Given a nonempty set $\mathsf{Prop}$ of propositional variables, let $\mathcal{L}$ be the propositional language generated by the grammar
\[\varphi::= p\mid \neg\varphi\mid (\varphi\wedge\varphi)\mid (\varphi\vee\varphi)\]
where $p\in\mathsf{Prop}$.}\end{definition}

The logics we will consider for this language are all examples of ``intro-elim logics'' in the following sense from \cite{Holliday2023}.

\begin{definition}\label{LogicDef} \textnormal{An \textit{intro-elim logic} is a binary relation $\vdash\,\subseteq\mathcal{L}\times\mathcal{L}$ such that for all $\varphi,\psi,\chi\in\mathcal{L}$:
\begin{center}
\begin{tabular}{ll}
1. $\varphi\vdash\varphi$ & 8. if $\varphi\vdash\psi$ and $\psi\vdash\chi$, then $\varphi\vdash\chi$  \\
2. $\varphi\wedge\psi\vdash\varphi$ & \\
3.  $\varphi\wedge\psi\vdash\psi$ & 9. if $\varphi\vdash \psi$ and $\varphi\vdash\chi$, then $\varphi\vdash \psi\wedge\chi$\\
4. $\varphi\vdash \varphi\vee\psi$ & \\
5. $\varphi\vdash \psi\vee\varphi$ & 10. if $\varphi\vdash\chi$ and $\psi\vdash\chi$, then $\varphi\vee\psi \vdash \chi$ \\
6. $\varphi\vdash \neg\neg\varphi$ &  \\
7. $\varphi\wedge\neg\varphi\vdash\psi$ \qquad\qquad\qquad& 11. if  $\varphi\vdash\psi$, then $\neg\psi\vdash\neg\varphi$. 
\end{tabular}
\end{center}
As in  \cite{Holliday2023}, we call the smallest intro-elim logic \textit{fundamental logic}, denoted $\vdash_\mathsf{F}$.}
\end{definition}

\textit{Orthologic} \cite{Goldblatt1974} is obtained from fundamental logic by adding
\begin{itemize}
\item \textit{double negation elimination}: $\neg\neg\varphi\vdash\varphi$. 
\end{itemize}
\textit{Compatibility logic} \cite{Fine2020} in the $\{\wedge,\vee,\neg\}$-fragment  is obtained from fundamental logic by strengthening proof-by-cases in Definition \ref{LogicDef}.10 to
\begin{itemize}
\item  \textit{proof-by-cases with side assumptions}: if $\alpha\wedge\varphi\vdash\chi$ and $\alpha\wedge\psi\vdash \chi$, then $\alpha\wedge (\varphi\vee\psi)\vdash\psi$.
\end{itemize}
\textit{Intuitionistic logic} in the $\{\wedge,\vee,\neg\}$-fragment \cite{Rebagliato1993} is obtained from compatibility logic by  
strengthening Definition \ref{LogicDef}.6/11 to the following, where $\bot$ abbreviates a contradiction such as $p\wedge \neg p$:
\begin{itemize}
\item \textit{psuedocomplementation}: if $\varphi\wedge\psi\vdash \bot$, then $\psi\vdash\neg\varphi$,
\end{itemize}
\textit{Classical logic} is obtained by strengthening orthologic with either proof by cases with side assumptions or pseudocomplementation (see \cite[Proposition~3.7]{Holliday-Mandelkern2022}).

In the context of a language with epistemic modals $\Diamond$ (`might') and $\Box$ (`must'), Holliday and Mandelkern \cite{Holliday-Mandelkern2022} argue that proof-by-cases with side assumptions and pseudocomplementation are both invalid. They start by accepting well-known arguments (see Section 2 of  \cite{Holliday-Mandelkern2022} for a review of the relevant data) that sentences of the form `It's raining and it might not be raining' ($p\wedge\Diamond\neg p$) are semantically contradictory, not just pragmatically infelicitous, because they embed under other operators in the way that outright contradictions (`It's raining and it's not raining') do, rather than how consistent but infelicitous Moore sentence (`It's raining but I don't know it') do. Thus, if $\Diamond$ is the epistemic modal `might', we have $p\wedge\Diamond \neg p\vdash\bot$.\footnote{Of course if we consider an alethic or counterfactual possibility modal $M$ such as `it could have been the case that', then $p\wedge M \neg p\nvdash\bot$.} But then pseudocomplementation would yield $\Diamond\neg p\vdash \neg p$, which is absurd, since we cannot validly reason from `It might not be raining' to `It's not raining'. And proof-by-cases with side assumptions would yield $\Diamond p \wedge (p\vee\neg p)\vdash p$ (since $\Diamond p\wedge p\vdash p$ and $\Diamond p\wedge \neg p\vdash\bot\vdash p$), which is similarly absurd, since we cannot validly reason from `It might be raining' together with `It's raining or it's not raining' to `It \textit{is} raining'.\footnote{Note that this argument does not assume that one accepts the premise `It's raining or it's not raining' as a result of a general acceptance of excluded middle. The point is just that accepting this as a meteorological fact on a particular occasion does not license the inference from `It might be raining' to `It is raining'.}

Although we do not have the epistemic modals $\Diamond$ and $\Box$ in our formal language $\mathcal{L}$, we intend the elements of $\mathsf{Prop}$ from which $\mathcal{L}$ is generated to be genuine \textit{propositional variables}, standing in for arbitrary propositions (cf.~\cite[pp.~147-8]{Burgess2003}). Thus, if we accept the counterexamples to pseudocomplementation and proof-by-cases with side assumptions above, then we cannot accept these principles for~$\mathcal{L}$. This suggests orthologic as an appropriate base logic, as argued in \cite{Holliday-Mandelkern2022}. But when we consider reasoning not only with epistemic modals but also with vague predicates, might this suggest going still weaker than orthologic? This is one of the questions we will consider in what follows.

\section{Relational semantics for ortho and fundamental logic}\label{FixpointSemantics}

To give semantics for the logics introduced in \S~\ref{LogicsSection}, we begin with a familiar starting point from modal logic, namely relational frames.

\begin{definition}\label{ModelDef} \textnormal{A \textit{relational frame} is a pair $(X,\comp)$ where $X$ is a nonempty set and $\comp$ is a binary relation on $X$. A \textit{relational model} for $\mathcal{L}$ is a triple $\mathcal{M}=(X,\comp, V)$ where $(X,\comp)$ is a relational frame and $V$ assigns to each $p\in\mathsf{Prop}$ a set $V(p)\subseteq X$.}\end{definition}

In possible world semantics for intuitionistic logic, $\comp$ would be a preorder $\leqslant$, and $V(p)$ would be required to be a  \textit{downset}, i.e., if $x\in V(p)$ and $x'\leqslant x$, then $x'\in V(p)$. States in $X$ are understood as information states, and $x'\leqslant x$ means that $x'$ contains all the information that $x$ does and possibly more.\footnote{Many authors work instead with upsets and take $x'\geqslant x$ to mean that $x'$ contains all the information that $x$ does and possibly more. Our formulation in terms of downsets follows, e.g., Dragalin~\cite{Dragalin1988}.}

We will give a different interpretation of $\comp$ that motivates a different constraint on $V(p)$. We read $x\comp y$ as \textit{$x$ is open to $y$}. Paraphrasing Remark 4.2 of \cite{Holliday2023}, we understand openness as follows:
\begin{itemize}
\item There is a distinction between \textit{accepting} a proposition and \textit{rejecting} it.
\item It should be possible for a partial state to be completely noncommittal about a proposition, so non-acceptance of $A$ should not entail rejection of $A$.
\item It should be possible for a state to reject a proposition without accepting its negation; e.g., an intuitionist might reject an instance of the law of excluded middle but will not accept its negation, which is a contradiction (see Field's \cite{Field2003} separation of rejection, non-acceptance, and acceptance of the negation).
\item We say that $x$ is \textit{open to} $y$ iff $x$ does not reject any proposition that $y$ accepts.
\end{itemize}

This picture motivates the following definition.
\begin{definition}\label{AcceptRejectDef} \textnormal{Given a relational frame $(X,\comp)$ and  $A\subseteq X$, we say that
\begin{enumerate}
\item $x$ \textit{accepts} $A$ if $x\in A$; 
\item\label{AcceptRejectDef2} $x$ \textit{rejects} $A$ if for all $y\compflip x$, $y\not\in A$; 
\item $x$ \textit{accepts the negation of} $ A$ if for all $y\comp x$, $y\not\in A$. 
\end{enumerate}}
\end{definition}

Now we can state the constraint on $V(p)$ that we will use instead of the downset constraint:
\[\mbox{if $x$ does not accept $V(p)$, then there is a state $x'$ open to $x$ that rejects $V(p)$,}\]
or formally,
\begin{equation}\mbox{if }x\not\in V(p),\mbox{ then }\exists x'\comp x\, \forall x''\compflip x' \; x''\not\in V(p).\label{FixEq}\end{equation}

\begin{definition} \textnormal{We say a relational model $\mathcal{M}=(X,\comp, V)$ is a \textit{relational fixpoint model} if for all $p\in\mathsf{Prop}$, the set $V(p)$ satisfies (\ref{FixEq}) for all $x\in X$.}
\end{definition}

The motivation for the `fixpoint' terminology is the following. Given a frame $(X,\comp)$, we define a closure operator\footnote{A function $c:\wp(X)\to\wp(X)$ is a \textit{closure operator} if for all $A,B\subseteq X$, we have $A\subseteq c(A)$, $c(c(A))=c(A)$, and $A\subseteq B$ implies $c(A)\subseteq c(B)$.} ${c_\comp\colon \wp(X)\to\wp(X)}$  by \[c_\comp(A)=\{x\in X\mid \forall x'\comp x\,\exists x''\compflip x'\colon x''\in A \}.\]
A \textit{fixpoint} of $c_\comp$ is a set $A\subseteq X$ such that $c_\comp(A)=A$. Since $A\subseteq c_\comp(A)$ for any $A\subseteq X$, that $A$ is a fixpoint is equivalent to $c_\comp (A)\subseteq A$. The contrapositive of this is just (\ref{FixEq}) for $A$ in place of $V(p)$:  \[\mbox{if }x\not\in A\mbox{, then }\exists x'\comp x\,\forall x''\compflip x'\colon x''\not\in A.\]

\noindent If we define a \textit{proposition} in $(X,\comp)$ to be a fixpoint of $c_\comp$, we have the following.

\begin{lemma} \textnormal{In a relational frame $(X,\comp)$, $x\comp y$ iff $x$ does not reject any proposition that $y$ accepts.}
\end{lemma}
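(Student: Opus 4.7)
The plan is to prove the biconditional in two directions, noting that the forward direction is essentially immediate from unpacking definitions, while the backward direction requires exhibiting a particular proposition as a witness.

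For the forward direction, I would assume $x \comp y$ and let $A$ be any proposition with $y \in A$. Since $\compflip$ is the converse of $\comp$, we have $y \compflip x$, so $y$ itself witnesses that it is \textit{not} the case that $\forall z \compflip x\colon z\notin A$. Hence $x$ does not reject $A$. No appeal to the fixpoint condition is needed here; only the definitions of acceptance and rejection from Definition~\ref{AcceptRejectDef}.

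For the backward direction I would argue contrapositively: assuming $x \not\comp y$, I will produce a single proposition $B$ with $y\in B$ that $x$ rejects. The natural candidate is \[B=\{z\in X\mid x\not\comp z\}.\] This set clearly contains $y$ by hypothesis, and $x$ rejects it essentially by fiat: any $z\compflip x$ satisfies $x\comp z$, hence $z\notin B$. The one nontrivial task is to verify that $B$ is a fixpoint of $c_\comp$, i.e., a proposition in the sense defined just before the lemma.

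I would use the contrapositive form of the fixpoint condition noted in the excerpt: it suffices to show that if $z\notin B$, then there exists $z'\comp z$ such that for every $z''\compflip z'$, $z''\notin B$. So suppose $z\notin B$, i.e., $x\comp z$. Take $z'=x$; then $z'\comp z$ holds by assumption, and for every $z''\compflip x$ we have $x\comp z''$ by definition of the converse, hence $z''\notin B$. This closes the verification. The only mildly subtle step is spotting that the definition of $B$ is tailored so that the witness $z'=x$ works uniformly; once this choice is made the checks are immediate, and the rest of the argument reduces to bookkeeping between $\comp$ and its converse $\compflip$.
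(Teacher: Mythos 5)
Your proof is correct, and its overall shape matches the paper's: the forward direction is the same unpacking of Definition~\ref{AcceptRejectDef}, and the converse direction exhibits a proposition that $y$ accepts and $x$ rejects. The difference lies in the choice of witness. The paper takes $c_\comp(\{y\})$, the closure of the singleton $\{y\}$: since $c_\comp$ is a closure operator, this set is a fixpoint for free (by idempotence), and the work goes into checking that $x$ rejects it. You instead take $B=\{z\in X\mid x\not\comp z\}$, which is tailored so that $x$ rejects it and $y$ belongs to it by fiat, and the work goes into checking the fixpoint condition; your verification with witness $z'=x$ is correct. Your $B$ is in fact $\neg_{\comp}\{x\}$ in the sense of clause 3 of Definition~\ref{AcceptRejectDef} (the set of states accepting the negation of $\{x\}$), and sets of this form are always fixpoints of $c_\comp$ --- you have effectively reproved that general fact in this instance, whereas the paper leans on the general fact that closures are fixpoints. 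Both witnesses are natural duals of one another (one built from $y$, one from $x$), and neither argument is longer than the other; your version has the mild advantage of not needing idempotence of $c_\comp$, at the cost of one extra explicit verification.
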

\begin{proof} We repeat the proof from Footnote 14 of \cite{Holliday2023}. If $x\comp y$ and $y$ accepts $A$, so $y\in A$, then $x$ does not reject $A$ by Definition \ref{AcceptRejectDef}.\ref{AcceptRejectDef2}. Conversely, suppose $x\not\comp y$ Now $y$ accepts  $c_\comp(\{y\})$, which is a fixpoint of $c_\comp$. But we claim $x$ rejects $c_\comp(\{y\})$. For if there is some $z$ such that $x\comp z\in c_\comp(\{y\})$, then by definition of $c_\comp$, there is an $x'\compflip x$ such that $x'\in \{y\}$, so $x'=y$. But this contradicts $x\not\comp y$. Hence there is no $z$ such that $x\comp z\in c_\comp(\{y\})$, which means that $x$ rejects $c_\comp(\{y\})$.  \qed
\end{proof}

Finally, we define the semantics of the connectives. The semantics of $\neg$ and $\wedge$ look the same as in relational semantics for intuitionistic logic (though again $\comp$ need not be a preorder), but the semantics of $\vee$ is different. Instead of interpreting $\vee$ as the union of propositions, as in intuitionistic logic, we interpret $\vee$ as the \textit{closure} of the union, i.e., by applying $c_\comp$ to the union.

\begin{definition}\label{TruthDef} \textnormal{Given a relational fixpoint model $\mathcal{M}=(X,\comp, V)$ for $\mathcal{L}$, $x\in X$, and $\varphi\in\mathcal{L}$, we define $\mathcal{M},x\Vdash \varphi$ as follows:
\begin{enumerate}
\item\label{TruthDefAtom} $\mathcal{M},x\Vdash p$ iff $x\in V(p)$;
\item $\mathcal{M},x\Vdash\neg\varphi$ iff  $\forall x'\comp x$, $\mathcal{M},x'\nVdash \varphi$;
\item $\mathcal{M},x\Vdash \varphi\wedge\psi$ iff $\mathcal{M},x\Vdash\varphi$ and $\mathcal{M},x\Vdash\psi$;
\item\label{TruthDefOr} $\mathcal{M},x\Vdash \varphi\vee\psi$ iff $\forall x'\comp x$ $\exists x''\compflip x'$: $\mathcal{M},x''\Vdash \varphi$ or $\mathcal{M},x''\Vdash \psi$.
\end{enumerate}
Let $\llbracket \varphi\rrbracket^\mathcal{M}=\{x\in X\mid \mathcal{M},x\Vdash\varphi\}$.}
\end{definition}

The idea behind the clause for $\vee$ is that in order for $x$ to accept $\varphi\vee\psi$, it is not required that $x$ already accepts one of the disjuncts---instead what is required is that \textit{no state open to $x$ rejects both disjuncts}. (Although it may seem that we hereby take a stand against intuitionistic logic on $\vee$, see Theorem \ref{SpecialCases}.\ref{SpecialCases1}.)

We can prove by an easy induction on the structure of formulas that the semantic value of every formula is a proposition in the following sense.

\begin{lemma}\label{FixpointLemma} \textnormal{Let $\mathcal{M}=(X,\comp,V)$ be a relational fixpoint model. Then for any $\varphi\in\mathcal{L}$, the set $\llbracket \varphi\rrbracket^\mathcal{M}$ is a  
fixpoint of $c_\comp$.}
\end{lemma}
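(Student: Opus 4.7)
The plan is a routine induction on the complexity of $\varphi$, handling one connective at a time. The base case $\varphi=p$ is immediate: $\llbracket p\rrbracket^\mathcal{M}=V(p)$, and a relational fixpoint model is defined precisely so that each $V(p)$ satisfies (\ref{FixEq}), which is exactly the fixpoint condition $c_\comp(V(p))\subseteq V(p)$ (equivalent to $c_\comp(V(p))=V(p)$ because closure operators are extensive).

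For $\varphi=\chi\wedge\psi$, I would reduce to the general fact that the intersection of two fixpoints of any closure operator is again a fixpoint: since $\llbracket \chi\wedge\psi\rrbracket^\mathcal{M}=\llbracket \chi\rrbracket^\mathcal{M}\cap \llbracket \psi\rrbracket^\mathcal{M}$, monotonicity of $c_\comp$ gives $c_\comp(A\cap B)\subseteq c_\comp(A)\cap c_\comp(B)=A\cap B$, and the reverse inclusion is automatic. For $\varphi=\chi\vee\psi$, I would observe that Definition \ref{TruthDef}.\ref{TruthDefOr} is literally the statement that $\llbracket \chi\vee\psi\rrbracket^\mathcal{M}=c_\comp(\llbracket \chi\rrbracket^\mathcal{M}\cup \llbracket \psi\rrbracket^\mathcal{M})$, which is a fixpoint by idempotence of $c_\comp$. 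Notably, the disjunction case does not even require the inductive hypothesis.

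The negation case is where I expect one to be slightly more careful, though the argument is still short. Writing $A=\llbracket \psi\rrbracket^\mathcal{M}$, suppose $x\notin \llbracket \neg\psi\rrbracket^\mathcal{M}$, so that some $y\comp x$ satisfies $y\in A$. I claim this $y$ itself witnesses (\ref{FixEq}) in the role of $x'$: for any $x''\compflip y$ we have $y\comp x''$ together with $y\in A$, and by the clause for $\neg$ in Definition \ref{TruthDef} this exactly says $x''\notin \llbracket \neg\psi\rrbracket^\mathcal{M}$. The only nuisance is keeping $\comp$ and $\compflip$ straight, since the fixpoint condition alternates them while the negation clause uses $\comp$ only; once the right witness is named the verification is a one-liner, and no genuine obstacle appears.
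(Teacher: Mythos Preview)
Your proposal is correct and follows exactly the approach the paper indicates: the paper merely states that the lemma is proved ``by an easy induction on the structure of formulas'' without spelling out any of the cases, and your case analysis is a faithful and accurate execution of that induction. In particular, your observation that the $\neg$ and $\vee$ cases do not actually require the inductive hypothesis is correct and worth noting, though the paper does not make this explicit.
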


A key feature of the states in a relational fixpoint model is that they can be \textit{partial}---there may be some propositions that they neither accept nor reject. Thus, we can consider how the set of propositions accepted by one state relates to that of another state. For this we have a key definition and lemma.

\begin{definition}\textnormal{Given a relational fixpoint model $\mathcal{M}=(X,\comp, V)$ and $x,y\in X$, we say that $x$ \textit{pre-refines} $y$ if for every $z\comp x$, we have $z\comp y$.}
\end{definition}
\noindent In symmetric models, we simply say \textit{refines} instead of \textit{pre-refines}. An easy induction on the structure of formulas establishes the following.

\begin{lemma} \textnormal{Given a relational fixpoint model $\mathcal{M}=(X,\comp, V)$ and $x,y\in X$, if $x$ pre-refines $y$, then for any $\varphi\in\mathcal{L}$, $\mathcal{M},y\Vdash\varphi$ implies $\mathcal{M},x\Vdash\varphi$.}
\end{lemma}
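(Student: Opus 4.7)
My plan is induction on the complexity of $\varphi$. Throughout, assume $x$ pre-refines $y$ (i.e., every $z\comp x$ also satisfies $z\comp y$) and $\mathcal{M},y\Vdash\varphi$; the goal is $\mathcal{M},x\Vdash\varphi$. The guiding observation is that pre-refinement is designed precisely so that any $\comp$-successor of $x$ is also a $\comp$-successor of $y$, which is exactly what the universal-successor truth clauses for $\neg$ and $\vee$ consume; the base case will instead exploit the fixpoint constraint (\ref{FixEq}) on $V(p)$.

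For $\varphi=p$, I argue contrapositively. Suppose $x\notin V(p)$. Since $V(p)$ is a fixpoint of $c_\comp$, condition (\ref{FixEq}) yields some $x'\comp x$ such that no $x''\compflip x'$ lies in $V(p)$. Pre-refinement upgrades $x'\comp x$ to $x'\comp y$, i.e., $y\compflip x'$, so $y\notin V(p)$, contradicting $\mathcal{M},y\Vdash p$.

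The remaining cases are essentially mechanical. For $\neg\psi$, given any $x'\comp x$, pre-refinement gives $x'\comp y$, so $\mathcal{M},y\Vdash\neg\psi$ directly yields $\mathcal{M},x'\nVdash\psi$. For $\psi\wedge\chi$, apply the inductive hypothesis to each conjunct. For $\psi\vee\chi$, given $x'\comp x$, pre-refinement again gives $x'\comp y$, and Definition \ref{TruthDef}.\ref{TruthDefOr} applied at $y$ supplies some $x''\compflip x'$ with $\mathcal{M},x''\Vdash\psi$ or $\mathcal{M},x''\Vdash\chi$; this same $x''$ witnesses the clause at $x$, so no inductive hypothesis is even invoked here. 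The atomic case is thus the only genuinely non-routine step, and its work is entirely offloaded onto the fixpoint condition imposed on $V$ in the definition of a relational fixpoint model; once one sees that pre-refinement is tailor-made to translate the witness produced by (\ref{FixEq}) at $x$ into a refuting witness at $y$, nothing further is required.
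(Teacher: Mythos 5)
Your proof is correct and is exactly the ``easy induction on the structure of formulas'' that the paper asserts without spelling out: the atomic case via the fixpoint condition (\ref{FixEq}) combined with pre-refinement, and the $\neg$, $\wedge$, $\vee$ cases by routine unwinding of the truth clauses. Nothing is missing, and your observation that only the conjunction case actually invokes the inductive hypothesis is accurate.
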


The final piece of setup we need are the definition of consequence over a class of models, as well as of soundness and completeness of a  logic with respect to the class, all of which is completely standard.

\begin{definition} \textnormal{Given a class $\mathsf{C}$ of relational models and $\varphi,\psi\in\mathcal{L}$, we define $\varphi\vDash_\mathsf{C}\psi$ iff for all $\mathcal{M}\in\mathsf{C}$ and states $x$ in $\mathcal{M}$, if $\mathcal{M},x\Vdash\varphi$, then $\mathcal{M},x\Vdash\psi$.}

\textnormal{We say that an intro-elim logic $\vdash_\mathsf{L}$ is \textit{sound} (resp.~\textit{complete}) \textit{with respect to $\mathsf{C}$} if for all $\varphi,\psi\in\mathcal{L}$, $\varphi\vdash_\mathsf{L}\psi$ (resp.~$\varphi\vDash_\mathsf{C}\psi$) implies $\varphi\vDash_\mathsf{C}\psi$ (resp.~$\varphi\vdash_\mathsf{L}\psi$).}
\end{definition}

We are now prepared to state soundness and completeness theorems for four of the logics introduced in \S~\ref{LogicsSection}. We begin with orthologic and fundamental logic.

\begin{theorem}[\cite{Goldblatt1974}]\label{OrthoSoundComplete} \textnormal{Orthologic is sound and complete with respect to the class of relational fixpoint models in which $\comp$ is reflexive and \textit{symmetric}.}
\end{theorem}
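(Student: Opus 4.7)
The plan is to prove soundness and completeness separately, with the main work on the completeness side.

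For soundness, I would verify that each of the rules of orthologic is semantically valid in every reflexive symmetric relational fixpoint model. Rules 1--5 and 8--10 of Definition~\ref{LogicDef} go through in any relational fixpoint model, using Definition~\ref{TruthDef} and Lemma~\ref{FixpointLemma}. Reflexivity of $\comp$ is needed only for rule 7 (explosion), since $x \Vdash \varphi$ together with $x \Vdash \neg\varphi$ is excluded via the witness $x \comp x$. Symmetry is what lets rules 6 and 11 go through, by allowing $y \comp x$ to be converted to $x \comp y$ where the definition of $\neg$ demands it. The only orthologic-specific rule is double negation elimination, whose soundness amounts to $\llbracket \neg\neg\varphi\rrbracket^\mathcal{M} \subseteq \llbracket\varphi\rrbracket^\mathcal{M}$, i.e., whenever $\forall y \comp x\, \exists z \comp y : z \Vdash \varphi$, one has $x \Vdash \varphi$. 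When $\comp = \compflip$ by symmetry, this is precisely the contrapositive of the fixpoint condition on $\llbracket\varphi\rrbracket^\mathcal{M}$ guaranteed by Lemma~\ref{FixpointLemma}.

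For completeness, I would build a canonical model $\mathcal{M}^c = (X^c, \comp^c, V^c)$ from the Lindenbaum--Tarski algebra of orthologic, which is an ortholattice by the intro-elim rules plus DNE. Let $X^c$ be the set of deductively closed orthologically consistent theories $\Gamma \subseteq \mathcal{L}$ (equivalently, proper filters of the algebra); declare $\Gamma \comp^c \Delta$ iff no formula $\varphi$ satisfies $\varphi \in \Gamma$ and $\neg\varphi \in \Delta$; and set $V^c(p) = \{\Gamma \in X^c \mid p \in \Gamma\}$. Reflexivity of $\comp^c$ is immediate from consistency, and symmetry follows from $\varphi \vdash_\mathsf{OL} \neg\neg\varphi$ (rule 6), since a witness $\psi \in \Delta,\, \neg\psi \in \Gamma$ for $\Delta \not\comp^c \Gamma$ converts into the witness $\neg\psi \in \Gamma,\, \neg\neg\psi \in \Delta$ for $\Gamma \not\comp^c \Delta$. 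The target is then a truth lemma $\mathcal{M}^c, \Gamma \Vdash \varphi \iff \varphi \in \Gamma$, after which completeness follows by contrapositive: if $\varphi \nvdash_\mathsf{OL} \psi$, the principal filter of $\varphi$ separates them.

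The main obstacle will be the induction step for disjunction in the truth lemma, together with the verification that each $V^c(p)$ satisfies (\ref{FixEq}). Both rely on an extension lemma of the form: if $\chi \notin \Gamma$, then there is $\Gamma' \comp^c \Gamma$ with $\neg \chi \in \Gamma'$. For atoms, the principal filter of $\neg p$ works, provided it is consistent and compatible with $\Gamma$, which follows from rule 11 together with DNE (contraposition would otherwise drop $p$ into $\Gamma$). For disjunctions, the analogous step requires the De Morgan law $\neg(\varphi \vee \psi) \dashv\vdash_\mathsf{OL} \neg\varphi \wedge \neg\psi$, whose nontrivial direction in orthologic uses DNE; once this is in hand, the filter generated by $\neg\varphi \wedge \neg\psi$ supplies the needed $\Gamma'$, and the inductive clauses for $\wedge$ and $\neg$ close routinely. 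Assembling the pieces, $\mathcal{M}^c$ is a reflexive symmetric relational fixpoint model witnessing completeness.
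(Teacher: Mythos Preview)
The paper does not give its own proof of this theorem; it simply attributes the result to Goldblatt~\cite{Goldblatt1974} and states it without argument. So there is nothing in the paper to compare your proposal against directly.

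That said, your outline is essentially Goldblatt's original argument and is correct in its main moves: soundness via verification of the rules (with DNE reduced to the fixpoint condition under symmetry), and completeness via a canonical model whose states are proper filters of the Lindenbaum ortholattice, with $\Gamma\comp^c\Delta$ iff no $\varphi$ has $\varphi\in\Gamma$ and $\neg\varphi\in\Delta$. Your handling of the disjunction clause---using the De~Morgan law $\neg(\varphi\vee\psi)\dashv\vdash\neg\varphi\wedge\neg\psi$ (which does require DNE for the nontrivial direction) together with the extension lemma supplying a principal filter witness---is exactly the standard route and goes through. One small inaccuracy: rule~11 (antitonicity of $\neg$) holds in \emph{all} relational fixpoint models, with no need for symmetry or even pseudosymmetry; it is rule~6 alone among the fundamental rules that requires (pseudo)symmetry, and DNE that requires full symmetry. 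This does not affect the overall argument.
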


\begin{theorem}[\cite{Holliday2023}]\label{FundamentalSoundComplete} \textnormal{Fundamental logic is sound and complete with respect to the class of relational fixpoint models in which $\comp$ is reflexive and \textit{pseudosymmetric}: if $y\comp x$, then there is a $z\comp y$ that pre-refines $x$.}
\end{theorem}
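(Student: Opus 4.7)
The plan is to prove soundness and completeness separately. For soundness, I would verify each of the eleven intro-elim clauses semantically, relying on reflexivity for the clauses where a state serves as its own witness and on pseudosymmetry for the one genuinely delicate clause. For completeness, I would build a canonical relational fixpoint model whose states encode both accepted and rejected information, and verify its properties via Lindenbaum-style extension arguments adapted to the weak proof-theoretic strength of $\vdash_\mathsf{F}$.

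For soundness, clauses 1--3, 8, and 9 are immediate from the definitions of $\wedge$ and of semantic consequence. Clause 10 follows from Lemma~\ref{FixpointLemma}: the semantic clause for $\vee$ gives $\llbracket \varphi \vee \psi \rrbracket^\mathcal{M} = c_\comp(\llbracket \varphi \rrbracket^\mathcal{M} \cup \llbracket \psi \rrbracket^\mathcal{M})$, and since $\llbracket \chi \rrbracket^\mathcal{M}$ is a $c_\comp$-fixpoint, the hypothesis $\llbracket \varphi \rrbracket^\mathcal{M} \cup \llbracket \psi \rrbracket^\mathcal{M} \subseteq \llbracket \chi \rrbracket^\mathcal{M}$ closes up under $c_\comp$ to yield $\llbracket \varphi \vee \psi \rrbracket^\mathcal{M} \subseteq \llbracket \chi \rrbracket^\mathcal{M}$. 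Clauses 4 and 5 use reflexivity by taking $x'' := x$ as the witness in the $\vee$-clause; clause 7 uses reflexivity to derive a contradiction from $x \Vdash \varphi \wedge \neg\varphi$; and clause 11 is immediate from the clause for $\neg$. The delicate case is clause 6, $\varphi \vdash \neg\neg\varphi$: given $x \Vdash \varphi$ and $x' \comp x$, I need to find some $z$ with $x' \comp z$ and $z \Vdash \varphi$, so that $x' \nVdash \neg\varphi$. Applying pseudosymmetry with $y := x'$ yields a $z$ satisfying $x' \comp z$ and pre-refining $x$; the pre-refinement lemma then delivers $z \Vdash \varphi$, completing the argument.

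For completeness, I would take canonical states to be pairs $(\Gamma, \Delta)$ of formula sets closed under appropriate rules, with a compatibility constraint ensuring that no derivation $\varphi_1 \wedge \dots \wedge \varphi_n \vdash_\mathsf{F} \psi_1 \vee \dots \vee \psi_m$ holds for finite $\varphi_i \in \Gamma$, $\psi_j \in \Delta$. Define $(\Gamma_1, \Delta_1) \comp (\Gamma_2, \Delta_2)$ iff $\Delta_1 \cap \Gamma_2 = \emptyset$, formalizing the reading ``state 1 does not reject what state 2 accepts.'' Reflexivity and the $c_\comp$-fixpoint property of $V(p) := \{(\Gamma, \Delta) \mid p \in \Gamma\}$ follow from a Lindenbaum-style extension lemma that adjoins any non-accepted formula to $\Delta$ while preserving compatibility. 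A truth lemma by induction on $\varphi$ then matches forcing with membership in $\Gamma$; given $\varphi \nvdash_\mathsf{F} \psi$, the seed pair $(\{\varphi\}, \{\psi\})$ extends to a canonical state forcing $\varphi$ but not $\psi$.

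The main obstacle is two-pronged. First, the canonical $\comp$ must be shown pseudosymmetric and not merely reflexive; because fundamental logic lacks both distributivity and double negation elimination, the standard prime-filter machinery used for intuitionistic and orthological completeness does not apply, so the extensions producing ``pre-refining'' witnesses must be constructed directly from the intro-elim rules. Second, the truth lemma for $\vee$ is delicate precisely because $\vee$ denotes the closure $c_\comp(\llbracket \varphi \rrbracket^\mathcal{M} \cup \llbracket \psi \rrbracket^\mathcal{M})$ rather than ordinary union: showing that $(\Gamma, \Delta)$ fails to accept $\varphi \vee \psi$ requires producing a $\comp$-successor $(\Gamma', \Delta')$ rejecting \emph{both} disjuncts, i.e., having $\varphi, \psi \in \Delta'$, and this extension must respect fundamental logic's weaker proof theory rather than exploiting either classical or intuitionistic disjunction behavior.
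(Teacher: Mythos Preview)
The paper does not give its own proof of this theorem; it is cited from \cite{Holliday2023}. So there is no in-paper argument to compare your proposal against. Your overall plan is the standard one and is on the right track: verify the eleven clauses semantically for soundness, and for completeness build a canonical model whose states are pairs $(\Gamma,\Delta)$ with $\comp$ reading ``does not reject what the other accepts.'' You also correctly isolate the two hard points in the completeness direction (producing pre-refining witnesses without prime-filter machinery, and the $\vee$-case of the truth lemma when disjunction is closure-of-union rather than union).

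There is, however, a directional slip in your clause~6 argument. By the $\neg$-clause, $x'\nVdash\neg\varphi$ means there exists $z$ with $z\comp x'$ (not $x'\comp z$) such that $z\Vdash\varphi$; so what pseudosymmetry must deliver, given $x'\comp x$, is a $z\comp x'$ that pre-refines $x$. Read literally, the paper's ``$\exists z\compflip y$'' unpacks (via $a\compflip b\Leftrightarrow b\comp a$, as used in the proof of Lemma~1) to ``$\exists z$ with $y\comp z$,'' which is trivially witnessed by $z=x$ and hence cannot be the operative condition---this appears to be a misprint. The version that matches the paper's own gloss ``accepting $\neg A$ implies rejecting $A$'' and that actually yields clause~6 is: $y\comp x\Rightarrow\exists z\comp y$ pre-refining $x$. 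With the direction corrected on both sides, your argument for clause~6 goes through via the pre-refinement lemma exactly as you intend. A smaller point: clauses~4--5 do not need reflexivity; taking $x'':=x$ already gives $x''\compflip x'$ directly from the hypothesis $x'\comp x$.
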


The key property of pseudosymmetry of $\comp$ for fundamental logic is equivalent to the intuitive condition that \textit{if $x$ accepts a proposition $\neg A$, then $x$ rejects $A$} (see \cite[\S~4.1]{Holliday2023}). Strengthening pseudosymmetry to symmetry implies the non-constructive condition that if $x$ accepts $\neg\neg A$, then $x$ accepts $A$.

As for intuitionistic and classical logic, it is easy to see that standard semantics for these logics are simply special cases of the relational fixpoint semantics (see \cite[\S~2]{Holliday2022} for further explanation), so from well-known completeness theorems for the logics with respect to their standard semantics, we obtain the following.

\begin{theorem}\label{SpecialCases}$\,$\textnormal{
\begin{enumerate}
\item\label{SpecialCases1} (Possible world semantics for intuitionistic logic) Intuitionistic logic is sound and complete with respect to the class of relational fixpoint models in which $\comp$ is reflexive and \textit{transitive}.
\item\label{SpecialCases2} (Possibility semantics for classical logic) Classical logic is sound and complete with respect to the class of relational fixpoint models in which $\comp$ is reflexive, symmetric, and \textit{compossible} \cite{Holliday2022}:  whenever $x\comp y$, there is some $z$ that refines both $x$ and $y$.
\item (Possible world semantics for classical logic) Classical logic is sound and complete with respect to the class of relational fixpoint models in which $\comp$ is the identity relation (and $|X|=1$).
\end{enumerate}}
\end{theorem}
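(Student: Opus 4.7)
The plan is to prove each item by checking that the relational fixpoint semantics, restricted to the specified class of frames, coincides with an already-standard semantics for the logic, and then invoking the corresponding well-known completeness theorem. In each case the work splits into a structural lemma identifying the fixpoints and the $\vee$-clause with familiar objects, followed by a citation of the standard result.

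For item 1, my first step would be to show that when $\comp$ is a preorder, the fixpoints of $c_\comp$ are exactly the $\comp$-downsets of $X$: reflexivity gives the forward direction by taking $x' = x$ in the fixpoint equation, and transitivity gives the reverse direction, since assuming $y \comp x$, $x \in A$, and $y \notin A$ produces some $y' \comp y$ with no $y'$-successor in $A$, while transitivity yields $y' \comp x$, contradicting $x \in A$. Since unions of downsets are downsets, the $\vee$-clause in Definition \ref{TruthDef} collapses to set-theoretic union, and the clauses for atoms, $\wedge$, and $\neg$ already match the standard intuitionistic Kripke clauses on downsets; the result then follows from the usual Kripke completeness of $\mathsf{Int}$. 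For item 3, every subset of a singleton is trivially a fixpoint of $c_\comp$, and the clauses for $\neg$ and $\vee$ degenerate to the classical Boolean truth tables (there is only one $\comp$-neighbour to quantify over), so models are in bijection with Boolean valuations and the result reduces to classical truth-table completeness.

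Item 2 is the substantive case. The specified class of frames is essentially the class of possibility frames of \cite{Holliday2022}, with $\comp$ playing the role of the associated compatibility relation, so I would appeal to the analysis there to conclude that the algebra of fixpoints of $c_\comp$ forms a complete Boolean algebra. For a self-contained soundness proof I would verify distributivity directly: given $x \Vdash \varphi \wedge (\psi \vee \chi)$ and an arbitrary $x' \comp x$, apply compossibility to the pair $(x, x')$ to obtain a common refinement $z_1$ that forces both $\varphi$ and $\psi \vee \chi$; unfold the $\vee$-clause at $z_1$, using reflexivity to take $z_1$ itself as the $\comp$-challenger, obtaining some $w$ with $z_1 \comp w$ that forces $\psi$ or $\chi$; then apply compossibility once more to $(z_1, w)$ to produce a further common refinement $z_2$, which forces $\varphi$ together with one of $\psi$, $\chi$, and whose refinement of $x'$ gives $z_2 \comp x'$ by reflexivity. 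Combined with the double negation elimination already valid under symmetry by Theorem \ref{OrthoSoundComplete}, distributivity upgrades orthologic to classical logic. Completeness is then immediate from item 3, since any classical valuation is realized by a one-point model lying in the class.

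The main obstacle will be managing the interplay between compossibility, the refinement relation (which transmits truth to more refined states via the refinement lemma), and the fixpoint closure built into the $\vee$-clause. The decisive idea is to apply compossibility twice: once to pair the given state $x$ with an arbitrary $\comp$-challenger $x'$, and once to pair the resulting refinement with the disjunct-witness supplied by the fixpoint unfolding of $\psi \vee \chi$; only by composing these two applications can one simultaneously secure refinement of $x'$ (for the outer $\vee$-clause) and a direct witness of $\psi$ or $\chi$ (for one of the inner conjunctions).
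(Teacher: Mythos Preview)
Your proposal is correct and follows exactly the strategy the paper gestures at---reduce each class of fixpoint models to a standard semantics and invoke the corresponding known completeness theorem---though the paper itself provides no details, merely remarking that the reductions are ``easy to see'' and citing \cite{Holliday2022}. Your direct two-step compossibility argument for distributivity in item~2 is a genuine addition beyond the paper's bare citation; the only cosmetic slip is that in item~1 your labels ``forward'' and ``reverse'' are swapped relative to how you stated the biconditional (taking $x'=x$ establishes that downsets are fixpoints, and the transitivity argument establishes that fixpoints are downsets), though the arguments themselves are correct.
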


\section{Relational semantics for compatibility logic}\label{FineSemantics}

To give semantics for his \textit{compatibility logic}, Fine \cite{Fine2020} does not use the relational \textit{fixpoint} models of \S~\ref{FixpointSemantics}.\footnote{It is possible to give a semantics for compatibility logic using fixpoint models, using results from \cite[\S~3.1]{Massas2023} and \cite{Holliday2021,Holliday2023}, but we will not go into the details here.} For he imposes no constraints on the valuation function $V$ (and hence no constraints on what sets of states count as propositions) and uses a different interpretation of $\vee$, as follows.

\begin{definition}\label{FineTruth} \textnormal{Given a relational model $\mathcal{M}=(X,\comp, V)$ for $\mathcal{L}$, $x\in X$, and $\varphi\in\mathcal{L}$, we define $\mathcal{M},x\vDash \varphi$ as follows:
\begin{enumerate}
\item $\mathcal{M},x\vDash p$ iff $x\in V(p)$;
\item $\mathcal{M},x\vDash\neg\varphi$ iff  $\forall x'\comp x$, $\mathcal{M},x'\nvDash \varphi$;
\item $\mathcal{M},x\vDash \varphi\wedge\psi$ iff $\mathcal{M},x\vDash\varphi$ and $\mathcal{M},x\vDash\psi$;
\item $\mathcal{M},x\vDash \varphi\vee\psi$ iff $\mathcal{M},x\vDash\varphi$ or $\mathcal{M},x\vDash\psi$.
\end{enumerate}
Let $\|\varphi\|^\mathcal{M}=\{x\in X\mid \mathcal{M},x\vDash\varphi\}$.}
\end{definition}

The soundness of compatibility logic with respect to this semantics is easy to check. Note, in particular, from the clauses for $\wedge$ and $\vee$ that this semantics validates the distributive laws, or equivalently, proof-by-cases with side assumptions, which are invalid according to orthologic and fundamental logic. However, if we ignore disjunction, then of course Fine's semantic clauses are the same as those in \S\ref{FixpointSemantics}. An obvious induction yields the following.

\begin{lemma}\label{EquivSem} \textnormal{For any relational fixpoint model $\mathcal{M}=(X,\comp, V)$  and $\varphi\in\mathcal{L}$, if $\varphi$ does not contain $\vee$, then $\llbracket \varphi\rrbracket^\mathcal{M}=\|\varphi\|^\mathcal{M}$.}\end{lemma}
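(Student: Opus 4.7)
The plan is to proceed by straightforward structural induction on the $\vee$-free formula $\varphi$, exploiting the fact that the clauses for atoms, negation, and conjunction in Definition~\ref{TruthDef} and Definition~\ref{FineTruth} are literally identical. The absence of $\vee$ is exactly what lets the induction go through, since that is the one connective whose two clauses differ (the fixpoint clause asks $\forall x'\comp x\,\exists x''\compflip x'$ something, while Fine's clause is a plain disjunction at $x$).

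For the base case $\varphi=p$, both $\mathcal{M},x\Vdash p$ and $\mathcal{M},x\vDash p$ unfold to $x\in V(p)$, so $\llbracket p\rrbracket^\mathcal{M}=\|p\|^\mathcal{M}=V(p)$. For $\varphi=\neg\psi$, the inductive hypothesis gives $\llbracket\psi\rrbracket^\mathcal{M}=\|\psi\|^\mathcal{M}$, hence for each $x'\comp x$ we have $\mathcal{M},x'\nVdash\psi$ iff $\mathcal{M},x'\nvDash\psi$; universally quantifying over $x'\comp x$ then yields $\mathcal{M},x\Vdash\neg\psi$ iff $\mathcal{M},x\vDash\neg\psi$. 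For $\varphi=\psi\wedge\chi$, both notions of satisfaction are defined as the conjunction of the respective satisfaction relations at $\psi$ and $\chi$, so the IH applied to each conjunct closes the case.

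Since $\varphi$ contains no $\vee$ by hypothesis, these are the only cases, and the induction is complete. The one thing worth remarking is that we do \emph{not} need to invoke the fixpoint condition on $V$ or Lemma~\ref{FixpointLemma} anywhere in this argument; the fixpoint property matters for guaranteeing that $\llbracket\varphi\rrbracket^\mathcal{M}$ is itself a proposition (a fixpoint of $c_\comp$), but the agreement of the two satisfaction relations on $\vee$-free formulas is purely a matter of syntactic identity of the clauses. There is no real obstacle here — the lemma is genuinely an induction that just records which connective clauses coincide.
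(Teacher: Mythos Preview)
Your proof is correct and is exactly the ``obvious induction'' the paper has in mind: the paper does not spell out any argument beyond that phrase, and your case-by-case verification that the clauses for $p$, $\neg$, and $\wedge$ coincide in Definitions~\ref{TruthDef} and~\ref{FineTruth} is precisely what is needed. Your observation that the fixpoint assumption on $V$ is not actually used in this particular argument is also accurate.
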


Now Fine does not work with \textit{arbitrary} relational models. He assumes the relation $\comp$ is reflexive and symmetric. Under these assumptions, we will write `$\between$' instead of `$\comp$' to emphasize the symmetry. Reflexivity and symmetry are the same conditions on the relation  as for orthologic in Theorem \ref{OrthoSoundComplete}, but the key difference is again that Fine puts no constraints on the valuation function $V$, so his semantics does not validate orthologic even in the $\vee$-free fragment. In particular, it does not validate the inference from $\neg\neg\varphi$ to $\varphi$, which othologic~does.

We will sketch the proof that compatibility logic is complete with respect to the class of relational models in which the relation is reflexive and symmetric. Those familiar with algebraic logic will have no trouble filling in all the details. To state the key result, let us say that a unary operation $\neg$ on a bounded lattice $L$ is a \textit{weak pseudocomplementation} (terminology from \cite{Dzik2006,Dzik2006b,Almeida2009}) if for all $a,b\in L$, we have that $a\wedge\neg a=0$, that $a\leq \neg\neg a$, and that $a\leq b$ implies $\neg b\leq\neg a$.

\begin{theorem}\label{FineRep} \textnormal{For any bounded distributive lattice $L$ equipped with a weak pseudocomplementation $\neg$, there is a set $X$ and a reflexive and symmetric binary relation $\between$ on $X$ such that $(L,\neg)$ embeds into $(\wp(X),\neg_{\between})$,  where $\neg_{\between} A=\{x\in X\mid \mbox{for all } y\between x,\, y\not\in A\}$.}
\end{theorem}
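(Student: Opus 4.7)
The plan is to use the standard Stone/Priestley-style representation of a bounded distributive lattice, taking $X$ to be the set of prime filters of $L$, and then to define a compatibility relation $\between$ on $X$ tailored to $\neg$. I would set $h(a)=\{F\in X\mid a\in F\}$; that $h$ preserves $0$, $1$, $\wedge$, $\vee$ and is injective is the standard consequence of the prime filter theorem for distributive lattices, so the real work lies in verifying the clause for negation.

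For the relation I propose
\[F\between G \iff \forall a\in F\; \neg a\notin G.\]
Reflexivity is immediate because no prime filter can contain both $a$ and $\neg a$, since their meet is $0$. Symmetry is the first nontrivial use of the axioms on $\neg$: if $b\in G$ had $\neg b\in F$, then by the definition $\neg\neg b\notin G$; but $b\leq\neg\neg b$ and $G$ is upward closed, giving a contradiction.

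The goal is then the equivalence
\[\neg a\in F \iff \forall G\between F\; a\notin G,\]
i.e., $h(\neg a)=\neg_\between h(a)$. The right-to-left direction is just the definition of $\between$ instantiated at $a$. The left-to-right direction, which is the heart of the argument, requires constructing from the assumption $\neg a\notin F$ a prime filter $G\ni a$ with $F\between G$. In view of symmetry, $F\between G$ is the same as disjointness of $G$ from $I_F:=\{b\in L\mid \neg b\in F\}$, so I would aim to show $I_F$ is an ideal and then invoke the prime filter theorem to separate $a$ from $I_F$; the hypothesis $\neg a\notin F$ is exactly $a\notin I_F$.

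The main obstacle is showing $I_F$ is an ideal. Downward closure follows at once from antitonicity of $\neg$; closure under binary joins is where weak pseudocomplementation genuinely enters. The equivalent formulation noted in the footnote gives the Galois-style biconditional $x\leq \neg y\Leftrightarrow y\leq\neg x$, from which I derive $\neg b_1\wedge\neg b_2\leq \neg(b_1\vee b_2)$: setting $c=\neg b_1\wedge\neg b_2$, one has $c\leq \neg b_i$, hence $b_i\leq \neg c$ for each $i$, so $b_1\vee b_2\leq \neg c$ and thus $c\leq \neg(b_1\vee b_2)$. Combined with $\neg b_1,\neg b_2\in F$ and closure of $F$ under $\wedge$, this places $\neg(b_1\vee b_2)$ in $F$ and hence $b_1\vee b_2\in I_F$. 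Distributivity of $L$ then licenses the prime filter theorem to produce a prime filter $G$ containing $a$ and disjoint from $I_F$, which is exactly the required $G\between F$ witnessing $\neg a\notin F$.
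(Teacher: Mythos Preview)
Your proposal is correct and follows exactly the approach sketched in the paper: Stone's prime-filter representation of the distributive lattice together with the compatibility relation $F\between G$ iff no $a\in F$ has $\neg a\in G$, with the ideal $I_F=\{b\mid \neg b\in F\}$ doing the work for the nontrivial inclusion. One small slip: your labels ``left-to-right'' and ``right-to-left'' are reversed---the construction of $G$ from the assumption $\neg a\notin F$ is the contrapositive of the \emph{right-to-left} direction $\neg_\between h(a)\subseteq h(\neg a)$, while the direction that is ``just the definition instantiated at $a$'' is $h(\neg a)\subseteq \neg_\between h(a)$, i.e., left-to-right---but the mathematics itself is sound.
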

\begin{proof}[Sketch] Apply Stone's \cite{Stone1938} representation of distributive lattices using prime filters to  $L$. Then given prime filters $F,F'$, say that $F\between F'$ if there is no $a\in L$ such that $a\in F$ and $\neg a\in F'$.
\qed\end{proof}

\begin{theorem}\label{CompatibilityCompleteness} \textnormal{Compatibility logic is sound and complete with respect to reflexive and symmetric models under the semantics of Definition \ref{FineTruth}.}
\end{theorem}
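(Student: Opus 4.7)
Soundness is the routine direction: one checks that each of the intro--elim clauses 1--11, together with proof-by-cases with side assumptions, preserves truth in any reflexive-and-symmetric model under Definition \ref{FineTruth}. Clauses 1--5 and 8--10 are immediate from the clauses for $\wedge$ and $\vee$; clause 6 ($\varphi\vdash\neg\neg\varphi$) uses symmetry of $\between$; clause 7 uses reflexivity (if $x\vDash\varphi\wedge\neg\varphi$, then $x\between x$ gives $x\nvDash\varphi$, a contradiction); and clause 11 is direct from the clause for $\neg$. Proof-by-cases with side assumptions falls out of the fact that $\vee$ is interpreted as plain set-theoretic union, so $\|\alpha\wedge(\varphi\vee\psi)\|^\mathcal{M}=\|(\alpha\wedge\varphi)\vee(\alpha\wedge\psi)\|^\mathcal{M}$.

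For completeness, the plan is a standard Lindenbaum--Tarski construction followed by an appeal to Theorem \ref{FineRep}. Define $\varphi\equiv\psi$ iff $\varphi\vdash_\mathsf{C}\psi$ and $\psi\vdash_\mathsf{C}\varphi$, where $\vdash_\mathsf{C}$ denotes compatibility logic. Let $L=\mathcal{L}/{\equiv}$, with meet, join, and unary operation induced by $\wedge$, $\vee$, and $\neg$. Using clauses 1--5 and 8--10 one verifies $L$ is a lattice; to enrich to a \emph{bounded} lattice one adds top $\top=[\varphi\vee\neg\varphi]$ and bottom $\bot=[\varphi\wedge\neg\varphi]$ (using clause 7 to see these equivalence classes are independent of $\varphi$, after the usual tweak of moving to sequents with finite left-hand sides, or by restricting to a language with $\top,\bot$ primitive). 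Distributivity of $L$ is exactly where proof-by-cases with side assumptions is used: the inequality $[\alpha]\wedge([\varphi]\vee[\psi])\leq ([\alpha]\wedge[\varphi])\vee([\alpha]\wedge[\psi])$ follows by applying that rule to $\alpha\wedge\varphi\vdash (\alpha\wedge\varphi)\vee(\alpha\wedge\psi)$ and $\alpha\wedge\psi\vdash (\alpha\wedge\varphi)\vee(\alpha\wedge\psi)$. That $\neg$ descends to a weak pseudocomplementation on $L$ is exactly clauses 6, 7, and 11.

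Now apply Theorem \ref{FineRep} to obtain a set $X$ with a reflexive and symmetric $\between$ and a lattice-with-weak-pseudocomplementation embedding $h\colon (L,\neg)\hookrightarrow (\wp(X),\neg_\between)$. Define a valuation $V(p)=h([p])$ and let $\mathcal{M}=(X,\between,V)$. The truth lemma $\|\varphi\|^\mathcal{M}=h([\varphi])$ is then proved by induction on $\varphi$: the base case and the $\wedge,\vee$ cases are direct because $h$ preserves finite meets and joins and because under Definition \ref{FineTruth} the connectives $\wedge$ and $\vee$ are interpreted as intersection and union; the $\neg$ case is precisely the fact that $h$ intertwines $\neg$ with $\neg_\between$. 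From this, if $\varphi\nvdash_\mathsf{C}\psi$, then $[\varphi]\not\leq[\psi]$ in $L$, so $h([\varphi])\not\subseteq h([\psi])$, giving a state in $\mathcal{M}$ at which $\varphi$ holds but $\psi$ fails.

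The main obstacle I anticipate is not any single algebraic step but rather the bookkeeping around distributivity and boundedness: one must confirm that proof-by-cases with side assumptions really does yield full lattice distributivity on the quotient (and not some weaker congruence-relativised version), and one must handle $\top$ and $\bot$ cleanly in a $\{\wedge,\vee,\neg\}$-language without primitive constants. Once those are discharged, Theorem \ref{FineRep} does the semantic work and the truth lemma is essentially bookkeeping.
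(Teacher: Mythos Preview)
Your approach is exactly the paper's: soundness by routine verification, completeness by forming the Lindenbaum--Tarski algebra of compatibility logic and applying Theorem~\ref{FineRep}. One correction to your bookkeeping: compatibility logic does \emph{not} validate excluded middle, so $[\varphi\vee\neg\varphi]$ is not the top of the quotient lattice; the right choice is $\top=[\neg(\varphi\wedge\neg\varphi)]$, which is above every $[\psi]$ by clauses 6, 7, 8, and 11 (from $\varphi\wedge\neg\varphi\vdash\neg\psi$ infer $\neg\neg\psi\vdash\neg(\varphi\wedge\neg\varphi)$, then precompose with $\psi\vdash\neg\neg\psi$). With that fix, the rest of your sketch goes through as written.
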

\begin{proof}[Sketch] As noted above, soundness is an easy check. For completeness, apply Theorem \ref{FineRep} to the Lindenbaum-Tarski algebra of compatibility~logic.
\qed\end{proof}

Weak pseudocomplementations are precisely the types of negations in algebras for fundamental logic (see \cite[\S~3]{Holliday2023}), only the underlying lattices of these algebras are not necessarily distributive. However, if we drop $\vee$ from the language, then we cannot notice this difference between compatibility logic and fundamental logic. The following theorem is essentially Proposition 4.34 of \cite{Holliday2023} but rephrased in terms of compatibility logic instead of the modal logic \textbf{KTB}.

\begin{proposition} \textnormal{For any  $\varphi,\psi\in\mathcal{L}$ not containing $\vee$, we have $\varphi\vdash\psi$ in compatibility logic if and only if $\varphi\vdash\psi$ in fundamental logic.}
\end{proposition}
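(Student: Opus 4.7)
My plan is to prove the two directions separately, reducing the nontrivial direction to an algebraic embedding lemma.

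The easy direction $\vdash_\mathsf{F}\,\subseteq\,\vdash_\mathsf{CL}$ holds on all of $\mathcal{L}$, not just the $\vee$-free fragment. It suffices to verify that compatibility logic is itself an intro-elim logic; the only non-immediate check is clause~10 of Definition~\ref{LogicDef}, i.e.\ proof-by-cases without side assumptions. Given $\varphi\vdash\chi$ and $\psi\vdash\chi$, I would instantiate proof-by-cases with side assumptions using $\alpha := \varphi\vee\psi$: clauses~2 and~8 give $\alpha\wedge\varphi\vdash\chi$ and $\alpha\wedge\psi\vdash\chi$, hence $\alpha\wedge(\varphi\vee\psi)\vdash\chi$; and clauses~1 and~9 give $\varphi\vee\psi\vdash\alpha\wedge(\varphi\vee\psi)$, so clause~8 yields $\varphi\vee\psi\vdash\chi$. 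Since $\vdash_\mathsf{F}$ is the smallest intro-elim logic, the inclusion follows.

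For the reverse inclusion restricted to the $\vee$-free fragment, I would pass to the algebraic side of the completeness theorems. By Theorem~\ref{FineRep} together with the usual Lindenbaum--Tarski argument, $\vdash_\mathsf{CL}$ is the equational consequence of the class of bounded distributive lattices equipped with a weak pseudocomplementation (wpc), while the algebraic companion of Theorem~\ref{FundamentalSoundComplete} identifies $\vdash_\mathsf{F}$ with the equational consequence of the class of bounded (not necessarily distributive) lattices with a wpc. It therefore suffices to exhibit, for every bounded lattice $L$ with a wpc, an embedding into a bounded distributive lattice with a wpc that preserves $\wedge$ and $\neg$ (though not necessarily $\vee$). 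Under such an embedding, any $\vee$-free inequality $\varphi^V\not\leq\psi^V$ failing in $L$ transfers to a failure in the distributive target, yielding a compatibility-logic countermodel.

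The embedding I would use is $a\mapsto{\downarrow}a$ from $L$ into the ideal lattice $\mathrm{Id}(L)$, with negation extended to ideals by $\neg I := \{b\in L\mid\forall a\in I,\ b\leq\neg a\}$. Ideal lattices are automatically distributive, and the preservation of $\wedge$, together with the identity $\neg{\downarrow}a={\downarrow}\neg a$, are short computations using the antitonicity of $\neg$ in $L$. The main technical step, and the part I expect to be most delicate, is that this extended $\neg$ is again a wpc on $\mathrm{Id}(L)$; of the three wpc laws, the critical one is $I\subseteq\neg\neg I$. Given $a\in I$ and $b\in\neg I$, the definition yields $b\leq\neg a$; antitonicity of $\neg$ in $L$ then gives $\neg\neg a\leq\neg b$; and the $L$-law $a\leq\neg\neg a$ gives $a\leq\neg b$, placing $a$ in $\neg\neg I$ as required. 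Once this embedding lemma is in place, the proposition follows by transferring a refuting valuation from $L$ to $\mathrm{Id}(L)$ and invoking algebraic completeness of compatibility logic.
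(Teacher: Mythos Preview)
Your overall strategy is sound and yields a genuine proof, but the key sentence ``ideal lattices are automatically distributive'' is false as stated and needs to be repaired. If $\mathrm{Id}(L)$ denotes the lattice of \emph{lattice} ideals (nonempty downsets closed under finite joins), then $\mathrm{Id}(L)$ is distributive only when $L$ already is: for instance, the lattice ideals of $M_3$ are exactly the principal ideals, so $\mathrm{Id}(M_3)\cong M_3$. What you want instead is the lattice $\mathcal{D}(L)$ of nonempty \emph{order} ideals (downsets containing $0$), ordered by inclusion with meet $=$ intersection and join $=$ union. This lattice is a bounded distributive lattice with bottom $\{0\}$ and top $L$, and all the computations you sketch go through verbatim there: $\downarrow a\cap\downarrow b=\downarrow(a\wedge b)$, the operation $\neg I=\{b\mid \forall a\in I,\ b\le\neg a\}$ returns a downset, $\neg\downarrow a=\downarrow\neg a$, $I\cap\neg I=\{0\}$, antitonicity is immediate, and your argument for $I\subseteq\neg\neg I$ is correct. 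With this emendation the embedding lemma stands and the nontrivial direction follows exactly as you outline. (One also needs that $a\mapsto\downarrow a$ reflects order, which is immediate.)

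As for comparison with the paper: the paper does not give a self-contained proof of this proposition at all. It simply remarks that the result is ``essentially Proposition~4.34 of \cite{Holliday2023} but rephrased in terms of compatibility logic instead of the modal logic \textbf{KTB}.'' So the paper's route is to pass through the known coincidence of the $\vee$-free fragments with (a fragment of) \textbf{KTB}, rather than to construct a distributive envelope directly. Your algebraic argument is thus a genuinely different and more self-contained approach; once the ideal/downset issue is fixed, it has the advantage of not relying on an external reference or on the modal translation.
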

\noindent However, the analogous result does not hold for formulas not containing $\wedge$: as Guillaume Massas (personal communication) pointed out, \[\neg ( \neg \varphi \vee \neg (\psi \vee \chi))\vdash\neg (\neg \varphi \vee \neg \psi) \vee \neg (\neg \varphi \vee \neg \chi)\] in compatibility logic but not in fundamental logic, as a consequence of distributivity in the former but not the latter.

Having completed our brief tour of the logics and semantics of interest, we turn in the next section back to vagueness.

\section{The symmetric Sorites model}\label{SymmSorites}

Recall the Sorites Paradox from \S~\ref{Intro}. In this section, we define a relational fixpoint model for each Sorites series. The construction is inspired by a passage of Fine \cite[p.~42]{Fine2020}:
\begin{quote} [T]hink of the points of a model as corresponding to different admissible uses of the language. The reflexive and symmetric relation will then relate two admissible uses when they are \textit{compatible} in the sense that there is no conflict in what is true under the one use and what is true under the other. Suppose, for example, that there are 100 men in our sorites series. Then the use in which the first 30 men are taken to be bald and the last 50 men are taken to be not bald will be compatible with the use in which the first 31 men are taken to be bald and the last 49 are taken not to be bald.
\end{quote}

First, let us specialize our propositional language for the Sorites paradox in particular. For convenience, we now take the first item in the Sorites series to be labeled by $0$ instead of $1$, and we adopt the identification $n=\{0,\dots,n-1\}$. 

\begin{definition} \textnormal{For $n\in\mathbb{N}$, let $\mathcal{L}_n$ be defined like $\mathcal{L}$ in Definition \ref{LangDef} with $\mathsf{Prop}=\{p_k\,\mid \, k\,\in\, n\}$.}
\end{definition}

We take $p_k$ to mean that the $k$-th member of the Sorites series satisfies the relevant vague predicate (e.g., `young', `bald', etc.).

\begin{definition}\label{Sorites} \textnormal{For $n,\delta\in\mathbb{N}$ with $1\leq\delta < n-1$, we define the \textit{symmetric Sorites model} $\mathcal{S}_{n,\delta}=(S,\between,V)$ for $\mathcal{L}_n$ as follows:
\begin{enumerate}
\item\label{SoritesA} $S$ is the set of all pairs $( i,j )$ for $i,j\in n\cup \{ -\infty,\infty\}$ such that $i+\delta<j$;\footnote{Note that $-\infty +\delta =-\infty$.}
\item\label{SoritesB} $( i,j )\between ( i',j' )$ iff $\mathrm{max}(i,i')< \mathrm{min} (j,j')$;
\item\label{SoritesC} $V(p_k) = \{ ( i,j ) \in S\mid k\leq i\}$.
\end{enumerate}}
\end{definition}
\noindent Figure \ref{SymModel} shows the symmetric Sorites model for $n=4$ and $\delta=1$.

The intuition behind Definition \ref{Sorites} is as follows: state $(i,j)$ will make $p_k$ true for all $k\leq i$, by part \ref{SoritesC} of the definition,  and will make $\neg p_k$ true for all $k\geq j$ by Fact \ref{NegAtom} below. Thus, part \ref{SoritesB} says that two states are compatible when there is no $p_k$ such that one state makes $p_k$ true and the other makes $\neg p_k$ true. Finally, note that part \ref{SoritesA} rules out sharp cutoffs: not only can a state not make $p_k$ true and $\neg p_{k+1}$ true, but if we pick a $\delta$ greater than $1$, then we can even rule out a state making $p_k$ true and $\neg p_{k+2}$ true, etc. For example, we may not want to admit a state according to which a person who is $k$ seconds old is young but a person who is $k+2$ seconds old is not young. No doubt the appropriate choice of $\delta$ in a given context is itself vague. Also note the role of $-\infty$ and $\infty$ in part~\ref{SoritesA}: the state $(-\infty,\infty)$ will not make true any formulas of the form $p_k$ or $\neg p_k$; a state such as $(-\infty, j)$ will not make true any formulas of the form $p_k$; and a state such as $(j,\infty)$ will not make true any formulas of the form $\neg p_k$.

\begin{figure}[h]
\begin{center}
\begin{tikzpicture}[scale=.75, every node/.style={scale=.75}][->,>=stealth',shorten >=1pt,auto,node distance=2.8cm,
                    semithick]

  \tikzstyle{every state}=[fill=white,draw=black,text=black]

  \node[state] (A) at (-6,0)                    [label=below:{$(3,\infty)$}] {{}};

  \node[state] (B) at (6,0)  [label=below:{$(-\infty, 0)$}] {{}};

  \node[state] (C) at (0,0) [label=below:{$(-\infty, \infty)$}] {{}};

  \node[state] (D) at (6,4)       [label=right:{$(-\infty, 1)$}] {{}};

  \node[state] (E) at (3,5)       [label=right:{$(-\infty, 2)$}] {{}};

  \node[state] (F) at (2,2)  [label=below:{$(-\infty, 3)$}] {{}};

  \node[state] (G) at (-2,2)  [label=below:{$(0, \infty)$}] {{}};

  \node[state] (H) at (2, 8)       [label=right:{$(0, 2)$}] {{}};

  \node[state] (I) at (0,10)  [label=above:{$(0, 3)$}] {{}};

  \node[state] (J) at (-3, 5)      [label=left:{$(1, \infty)$}] {{}};

  \node[state] (K) at (-2, 8)     [label=left:{$(1, 3)$}] {{}};

  \node[state] (L) at (-6, 4)      [label=left:{$(2, \infty)$}] {{}};

\path 
    (A) edge[-]              node {} (C)
        edge[-]              node {} (G)
        edge[-]              node {} (J)
        edge[-]              node {} (L)
    (B) edge[-]              node {} (C)
        edge[-]              node {} (D)
        edge[-]              node {} (E)
        edge[-]              node {} (F)
    (C) edge[-, bend right]  node {} (D)
        edge[-]              node {} (E)
        edge[-]              node {} (F)
        edge[-]              node {} (G)
        edge[-]              node {} (H)
        edge[-]              node {} (I)
        edge[-]              node {} (J)
        edge[-]              node {} (K)
        edge[-, bend left]   node {} (L)
    (D) edge[-]              node {} (E)
        edge[-]              node {} (F)
        edge[-]              node {} (G)
        edge[-]              node {} (H)
        edge[-, bend right=45] node {} (I)
    (E) edge[-]              node {} (F)
        edge[-]              node {} (G)
        edge[-]              node {} (H)
        edge[-]              node {} (I)
        edge[-]              node {} (J)
        edge[-]              node {} (K)
    (F) edge[-]              node {} (G)
        edge[-]              node {} (H)
        edge[-]              node {} (I)
        edge[-]              node {} (J)
        edge[-]              node {} (K)
        edge[-]              node {} (L)
    (G) edge[-]              node {} (H)
        edge[-]              node {} (I)
        edge[-]              node {} (J)
        edge[-]              node {} (K)
        edge[-]              node {} (L)
    (H) edge[-]              node {} (I)
        edge[-]              node {} (J)
        edge[-]              node {} (K)
    (I) edge[-, bend right=45] node {} (L)
        edge[-]              node {} (J)
        edge[-]              node {} (K)
    (J) edge[-]              node {} (K)
        edge[-]              node {} (L)
    (K) edge[-]              node {} (L);

\end{tikzpicture}
\end{center}
\caption{The symmetric Sorites model for $n=4$ and $\delta=1$.}\label{SymModel}
\end{figure}
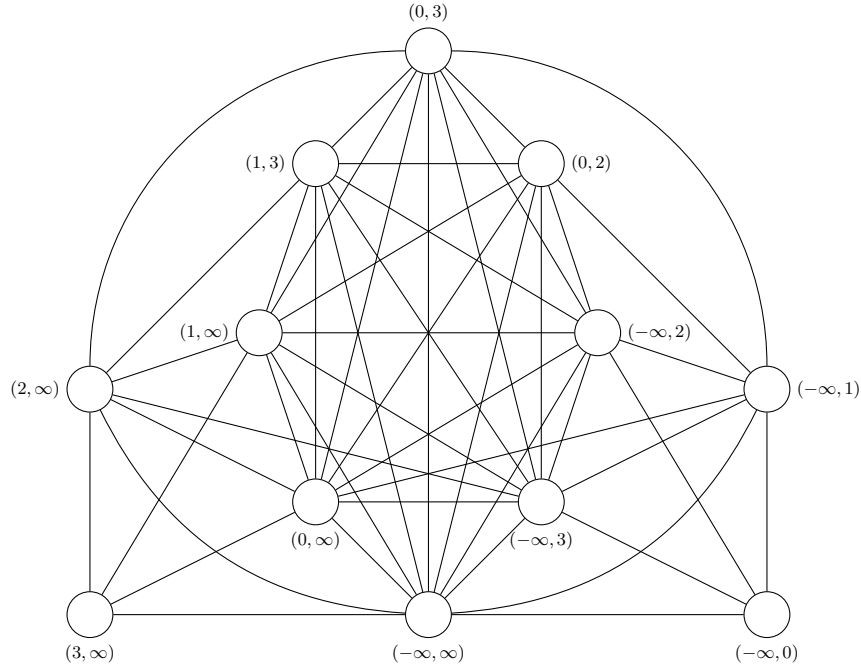

A key point about the symmetric Sorites model is that it is a model not only for Fine's semantics from \S~\ref{FineSemantics} but also for the fixpoint semantics from \S~\ref{FixpointSemantics}.

\begin{proposition}\label{Fixpoints} \textnormal{$\mathcal{S}_{n,\delta}$ is a relational fixpoint model.}
\end{proposition}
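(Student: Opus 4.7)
The plan is to verify directly that each valuation $V(p_k)$ satisfies the fixpoint condition (\ref{FixEq}): whenever $(i,j) \in S$ with $(i,j) \notin V(p_k)$, there should exist $x' \between (i,j)$ such that every $x'' \between x'$ also lies outside $V(p_k)$. By Definition \ref{Sorites}.\ref{SoritesC}, the hypothesis $(i,j) \notin V(p_k)$ unpacks to $i < k$.

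My proposed witness is $x' = (-\infty, k)$, which can be thought of as the ``canonical rejector'' of $p_k$---a state making no positive commitment about any atom but rejecting $p_k$ (and, as it turns out, each $p_m$ with $m \geq k$). The verification then proceeds in three short steps. First, $x' \in S$, since $-\infty + \delta = -\infty < k$ holds for every $k \in n$. Second, $(i,j) \between x'$ unfolds via Definition \ref{Sorites}.\ref{SoritesB} to $i = \max(i,-\infty) < \min(j,k)$, which in turn follows from our hypothesis $i < k$ together with $i < j$, itself a consequence of $i + \delta < j$ with $\delta \geq 1$. Third, for any $x'' = (i'',j'') \between x'$, we get $i'' = \max(i'',-\infty) < \min(j'',k) \leq k$, so $i'' < k$ and hence $x'' \notin V(p_k)$, as required.

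I do not expect a substantive obstacle here: once the correct witness is in hand, the verification is essentially bookkeeping. The only mildly delicate point is keeping the conventions involving the formal endpoints $-\infty$ and $\infty$ straight (namely $-\infty + \delta = -\infty$ and $-\infty < m$ for every $m \in n$). These endpoints were added to $n \cup \{-\infty,\infty\}$ in Definition \ref{Sorites}.\ref{SoritesA} precisely so that such canonical rejectors always lie in $S$, which makes the proposition essentially automatic.
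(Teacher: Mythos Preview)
Your proposal is correct and essentially identical to the paper's own proof: both take the witness $x'=(-\infty,k)$ and verify the three conditions (membership in $S$, compatibility with $(i,j)$, and exclusion of all $\between$-neighbors from $V(p_k)$) in the same way.
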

\begin{proof} We must show that $V(p_k)$ is a fixpoint of $c_\between$. Given $(i,j)\in S$, we have $i<j$ by Definition~\ref{Sorites}.\ref{SoritesA}. Now suppose $(i,j)\not\in V(p_k)$, so $i<k$ by Definition~\ref{Sorites}.\ref{SoritesC}. Let $(i',j')=(-\infty, k)$, so $(i',j')\in S$ by Definition \ref{Sorites}.\ref{SoritesA}. Then $\mathrm{max}(i,i')=i<\mathrm{min}(j,j')$, so $(i',j')\between (i,j)$ by Definition \ref{Sorites}.\ref{SoritesB}. Now for any $(i'',j'')\between (i',j')$, we have $i''<j'=k$ by Definition \ref{Sorites}.\ref{SoritesB}, so $(i'',j'')\not\in V(p_k)$ by Definition \ref{Sorites}.\ref{SoritesC}. Thus, we have shown there is  an $(i',j')\between (i,j)$ such that for all $(i'',j'')\between (i',j')$, we have $(i'',j'')\not\in V(p_k)$, which shows that $V(p_k)$ is a fixpoint of $c_\between$.\qed\end{proof}

The upshot of Proposition \ref{Fixpoints} is that $\mathcal{S}_{n,\delta}$ is a model for orthologic.

\begin{corollary}\label{SoundCor}\textnormal{Since the relation $\between$ in the relational fixpoint model $\mathcal{S}_{n,\delta}={(S,\between,V)}$ is reflexive and symmetric, orthologic is sound with respect to $\mathcal{S}_{n,\delta}$ according to the semantics of Definition \ref{TruthDef}, while compatibility logic is sound with respect to $\mathcal{S}_{n,\delta}$ according to the semantics of Definition~\ref{FineTruth}.}
\end{corollary}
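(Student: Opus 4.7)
The statement is essentially an immediate consequence of the soundness parts of Theorem \ref{OrthoSoundComplete} and Theorem \ref{CompatibilityCompleteness}, once we have verified the two hypotheses on $\between$. So my plan is to package the verification in three short steps.

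First, I would check \emph{reflexivity}: for $(i,j)\in S$, Definition \ref{Sorites}.\ref{SoritesA} gives $i+\delta<j$, and since $\delta\geq 1$ we have $i<j$, so $\max(i,i)=i<j=\min(j,j)$, whence $(i,j)\between (i,j)$ by Definition \ref{Sorites}.\ref{SoritesB}. Next I would check \emph{symmetry}, which is immediate from the symmetry of $\max$ and $\min$ in Definition \ref{Sorites}.\ref{SoritesB}: $\max(i,i')<\min(j,j')$ iff $\max(i',i)<\min(j',j)$.

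Having established reflexivity and symmetry of $\between$, I would then cite the two completed packages. For the first half of the corollary, Proposition \ref{Fixpoints} already tells us that $\mathcal{S}_{n,\delta}$ is a relational fixpoint model, so it lies in the class to which Theorem \ref{OrthoSoundComplete} applies, giving soundness of orthologic under the semantics of Definition \ref{TruthDef}. For the second half, Theorem \ref{CompatibilityCompleteness} gives soundness of compatibility logic over any reflexive symmetric relational model under the semantics of Definition \ref{FineTruth}, and $\mathcal{S}_{n,\delta}$ is such a model (no constraint on $V$ is required here).

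There is essentially no obstacle: the corollary is a ``reaping'' of the earlier work, and the only nontrivial content is the reflexivity check, which uses precisely the assumption $1\leq\delta$ built into Definition \ref{Sorites}. I would keep the proof very short, just a sentence or two verifying the two frame conditions followed by the two citations.
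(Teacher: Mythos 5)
Your proposal is correct and follows essentially the same route as the paper: verify that $\between$ is reflexive and symmetric directly from Definition \ref{Sorites}.\ref{SoritesB} (with Proposition \ref{Fixpoints} supplying the fixpoint condition needed for the orthologic case), then invoke the soundness halves of Theorems \ref{OrthoSoundComplete} and \ref{CompatibilityCompleteness}. Your reflexivity check is just a more explicit spelling-out of what the paper calls ``immediate'' (and in fact only needs $i<j$, which already follows from $i+\delta<j$ for any $\delta\geq 0$).
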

\begin{proof} That $\between$ is reflexive and symmetric is immediate from Definition \ref{Sorites}.\ref{SoritesB}. Then apply  Theorems~\ref{OrthoSoundComplete} and \ref{CompatibilityCompleteness}.
\qed\end{proof}

We now build up a sequence of facts about $\mathcal{S}_{n,\delta}$. The first is immediate from  Definitions \ref{TruthDef}.\ref{TruthDefAtom} and  \ref{Sorites}.\ref{SoritesC}.

\begin{fact}\label{Atom} \textnormal{For any $k\in n$, $\llbracket  p_k\rrbracket^{\mathcal{S}_{n,\delta}}  = \{ ( i,j ) \in S\mid k\leq i\}$.} 
\end{fact}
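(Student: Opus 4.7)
The plan is simply to unfold the two relevant definitions and observe that the equality is immediate; there is no genuine content beyond matching the definition of $\llbracket \cdot \rrbracket^{\mathcal{S}_{n,\delta}}$ at an atomic formula to the stipulated valuation $V$.

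First I would fix an arbitrary state $(i,j) \in S$ and reduce the claim to showing the chain of equivalences $(i,j) \in \llbracket p_k \rrbracket^{\mathcal{S}_{n,\delta}}$ iff $\mathcal{S}_{n,\delta}, (i,j) \Vdash p_k$ iff $(i,j) \in V(p_k)$ iff $k \leq i$. The first equivalence is just the definition of $\llbracket \cdot \rrbracket^{\mathcal{M}}$ given at the end of Definition~\ref{TruthDef}. The second is the atomic clause Definition~\ref{TruthDef}.\ref{TruthDefAtom}. The third is Definition~\ref{Sorites}.\ref{SoritesC}, the specification of $V$ in the symmetric Sorites model.

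Since the reasoning is entirely a matter of unfolding definitions, there is no real obstacle; in particular, one does not need to invoke the fixpoint property of $V(p_k)$ established in Proposition~\ref{Fixpoints}, because the semantic value of an atom is \emph{defined} to coincide with the valuation, independently of whether that valuation is a fixpoint. I would therefore present the fact without a formal proof environment, or at most with a one-line justification noting that it is obtained by composing the equivalences above.
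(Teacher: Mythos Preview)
Your proposal is correct and matches the paper's approach exactly: the paper simply states that the fact is immediate from Definitions~\ref{TruthDef}.\ref{TruthDefAtom} and~\ref{Sorites}.\ref{SoritesC}, which is precisely the chain of equivalences you unfold.
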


We can prove an analogous fact for $\neg p_k$ and the \textit{second} coordinate of states in the model.

\begin{fact}\label{NegAtom} \textnormal{For any $k\in n$, $\llbracket \neg p_k\rrbracket^{\mathcal{S}_{n,\delta}} = \{(i,j)\in S\mid  j\leq k\}$.}
\end{fact}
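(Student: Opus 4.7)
The plan is to unfold Definition \ref{TruthDef}.2 in the symmetric setting: $(i,j) \in \llbracket \neg p_k \rrbracket^{\mathcal{S}_{n,\delta}}$ iff every $(i',j') \between (i,j)$ fails to accept $p_k$, and by Fact \ref{Atom} the latter just means $i' < k$. So the claim reduces to showing
\[ j \leq k \;\Longleftrightarrow\; \forall (i',j') \between (i,j)\colon i' < k. \]
I would then prove the two directions separately.

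For the direction $j \leq k \Rightarrow$ ``$\neg p_k$ is accepted at $(i,j)$'', I would take an arbitrary $(i',j') \between (i,j)$ and use Definition \ref{Sorites}.\ref{SoritesB} to get $i' \leq \mathrm{max}(i,i') < \mathrm{min}(j,j') \leq j \leq k$, so $i' < k$, as desired.

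For the converse, I would argue by contrapositive: assume $j > k$ and exhibit a state in $S$ that is compatible with $(i,j)$ and accepts $p_k$. The natural witness is $(k,\infty)$. I first check $(k,\infty) \in S$ using Definition \ref{Sorites}.\ref{SoritesA}, since $k \in n$ is finite and $k + \delta < \infty$. Then I verify $(k,\infty) \between (i,j)$: we need $\mathrm{max}(i,k) < \mathrm{min}(j,\infty) = j$. The inequality $k < j$ is the standing assumption, and $i < j$ follows from $(i,j) \in S$ (since $i + \delta < j$ and $\delta \geq 1$). Finally, $k \leq k$ gives $(k,\infty) \in V(p_k)$ by Fact \ref{Atom}, so $(i,j)$ does \emph{not} accept $\neg p_k$.

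There is no real obstacle here; the only minor subtlety is making sure the witness $(k,\infty)$ genuinely lies in $S$ and that the $\mathrm{max}/\mathrm{min}$ check goes through for both coordinates, which is where the ``$\infty$'' endpoint introduced in Definition \ref{Sorites}.\ref{SoritesA} does its work (without it one would need a case analysis on whether $k$ is near the right endpoint of the Sorites series).
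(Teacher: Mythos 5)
Your proof is correct and follows essentially the same route as the paper's: the forward direction via the chain $i' < \mathrm{min}(j,j') \leq j \leq k$, and the converse by exhibiting the witness $(k,\infty) \between (i,j)$ accepting $p_k$. The only difference is that you spell out the membership of $(k,\infty)$ in $S$ and the $\mathrm{max}/\mathrm{min}$ check explicitly, which the paper leaves implicit.
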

\begin{proof} Given $(i,j)\in S$, suppose $j\leq k$. Then for any $(i',j')\between (i,j)$, we have  $i'<j\leq k$ by Definition~\ref{Sorites}.\ref{SoritesB}, so $i'<k$ and hence $(i',j')\not\in \llbracket p_k\rrbracket^{\mathcal{S}_{n,\delta}} $ by Fact~\ref{Atom}. Thus, $(i,j)\in \llbracket \neg p_k\rrbracket^{\mathcal{S}_{n,\delta}}$. Conversely, suppose $k<j$. Then $(k,\infty)\between (i,j)$ by Definition~\ref{Sorites}.\ref{SoritesB}, and $(k,\infty)\in \llbracket p_k\rrbracket^{\mathcal{S}_{n,\delta}}$ by Fact \ref{Atom}, so  $(i,j)\not\in \llbracket \neg p_k\rrbracket^{\mathcal{S}_{n,\delta}}$.
\qed\end{proof}

We can now show that in  $\mathcal{S}_{n,\delta}$, there are \textit{no sharp cutoffs} in the Sorites series.

\begin{fact}\label{NoSharpCutoffs} \textnormal{For any $k\in n$ and $\ell \in \delta+1$ with $k+\ell\leq n-1$, \[\mbox{$\llbracket p_k\wedge \neg p_{k+\ell}\rrbracket^{\mathcal{S}_{n,\delta}}= \varnothing$ and hence $\llbracket \neg (p_k\wedge \neg p_{k+\ell})\rrbracket^{\mathcal{S}_{n,\delta}}= S$.}\]}
\end{fact}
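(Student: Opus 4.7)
The plan is to combine Facts \ref{Atom} and \ref{NegAtom} with the pair-existence constraint in Definition \ref{Sorites}.\ref{SoritesA}, and then derive the second equality from the semantic clause for $\neg$.

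First, suppose for contradiction that some $(i,j) \in S$ lies in $\llbracket p_k \wedge \neg p_{k+\ell}\rrbracket^{\mathcal{S}_{n,\delta}}$. By Fact \ref{Atom}, membership in $\llbracket p_k\rrbracket^{\mathcal{S}_{n,\delta}}$ gives $k \leq i$, and by Fact \ref{NegAtom}, membership in $\llbracket \neg p_{k+\ell}\rrbracket^{\mathcal{S}_{n,\delta}}$ gives $j \leq k+\ell$. Combining these with $\ell \leq \delta$ yields $j \leq k + \ell \leq i + \ell \leq i + \delta$. But Definition \ref{Sorites}.\ref{SoritesA} requires $i + \delta < j$ for any $(i,j) \in S$, a contradiction. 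Hence $\llbracket p_k \wedge \neg p_{k+\ell}\rrbracket^{\mathcal{S}_{n,\delta}} = \varnothing$.

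For the second equality, unfold Definition \ref{TruthDef}: for any $(i,j)\in S$, $\mathcal{S}_{n,\delta},(i,j) \Vdash \neg(p_k \wedge \neg p_{k+\ell})$ iff for every $(i',j') \between (i,j)$ we have $(i',j') \notin \llbracket p_k \wedge \neg p_{k+\ell}\rrbracket^{\mathcal{S}_{n,\delta}}$. Since that latter set is empty by the first part, the universal condition holds vacuously at every state, so $\llbracket \neg(p_k \wedge \neg p_{k+\ell})\rrbracket^{\mathcal{S}_{n,\delta}} = S$.

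The main obstacle, such as it is, is just bookkeeping: making sure the arithmetic uses the exact bound $\ell \leq \delta$ (not $\ell < \delta$) and that the inequality $i+\delta < j$ from Definition \ref{Sorites}.\ref{SoritesA} is strict, which is what rules out the equality case $j = k + \ell = i + \delta$. The condition $k + \ell \leq n - 1$ plays no role in the argument itself; it is only needed to ensure that $p_{k+\ell}$ is a variable of $\mathcal{L}_n$ so that the formula $p_k \wedge \neg p_{k+\ell}$ is well-formed.
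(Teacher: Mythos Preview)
Your proof is correct and follows essentially the same approach as the paper's: both combine $k\leq i$ from Fact~\ref{Atom}, $j\leq k+\ell$ from Fact~\ref{NegAtom}, and the constraint $i+\delta<j$ from Definition~\ref{Sorites}.\ref{SoritesA} (together with $\ell\leq\delta$) to reach a contradiction. The paper phrases it directly rather than by contradiction and leaves the ``hence $\llbracket\neg(p_k\wedge\neg p_{k+\ell})\rrbracket=S$'' step implicit, but the substance is identical.
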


\begin{proof} For any $(i,j)\in S$,  we have $i+\delta<j$ by Definition~\ref{Sorites}.\ref{SoritesA}. Hence if ${(i,j)\in \llbracket p_k \rrbracket^{\mathcal{S}_{n,\delta}}}$, so $k\leq i$ by Definition~\ref{Sorites}.\ref{SoritesC}, then $k+\delta< j$ and hence $k+\ell <j$, so $(i,j)\not\in \llbracket \neg p_{k+\ell} \rrbracket^{\mathcal{S}_{n,\delta}}$ by Fact \ref{NegAtom}.
\qed\end{proof}

We can now put everything together to show the joint satisfiability of the claims that (i) the first element of the Sorites series satisfies the relevant predicate, (ii) the last element does not, and (iii) there are no sharp~cutoffs.

\begin{fact}\label{JointSat} \textnormal{For any $n,\delta\in\mathbb{N}$ with $1\leq \delta < n -1$, we have \[\mathcal{S}_{n,\delta}, (0,n-1)\Vdash p_0\wedge \neg p_{n-1}\wedge \underset{0\leq k\leq n- 2}{\bigwedge} \;\underset{\underset{k+\ell \leq n-1}{1\leq \ell \leq \delta}}{\bigwedge}\neg (p_k\wedge \neg p_{k+\ell}).\]}
\end{fact}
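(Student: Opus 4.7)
The plan is to verify that the distinguished state $(0,n-1)$ satisfies each conjunct separately and then combine via the semantic clause for $\wedge$ in Definition \ref{TruthDef}.

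First I would check that $(0,n-1)$ is actually a state of $\mathcal{S}_{n,\delta}$: this requires $0+\delta < n-1$, which is exactly the hypothesis $\delta < n-1$. Second, by Fact \ref{Atom}, $\mathcal{S}_{n,\delta},(0,n-1)\Vdash p_0$ since $0\leq 0$. Third, by Fact \ref{NegAtom}, $\mathcal{S}_{n,\delta},(0,n-1)\Vdash \neg p_{n-1}$ since $n-1\leq n-1$. Fourth, for each $k$ with $0\leq k\leq n-2$ and each $\ell$ with $1\leq\ell\leq\delta$ and $k+\ell\leq n-1$, Fact \ref{NoSharpCutoffs} gives $\llbracket\neg(p_k\wedge\neg p_{k+\ell})\rrbracket^{\mathcal{S}_{n,\delta}}=S$, so in particular $(0,n-1)$ forces each such $\neg(p_k\wedge\neg p_{k+\ell})$.

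Finally, since $\Vdash$ commutes with $\wedge$ (Definition \ref{TruthDef}.3), iterating over all the finitely many conjuncts yields the desired forcing of the whole conjunction at $(0,n-1)$. There is no real obstacle here: the entire content of the claim has been packaged into Facts \ref{Atom}, \ref{NegAtom}, and \ref{NoSharpCutoffs}, and the argument is a short bookkeeping check. The only thing that requires a moment's care is confirming that $(0,n-1)$ lies in $S$, which is where the constraint $1\leq\delta<n-1$ enters; once that is noted, the rest is a direct appeal to the preceding facts.
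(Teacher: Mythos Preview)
Your proposal is correct and follows essentially the same approach as the paper: verify $(0,n-1)\in S$ using $\delta<n-1$, then invoke Facts~\ref{Atom}, \ref{NegAtom}, and \ref{NoSharpCutoffs} for the three groups of conjuncts, combining via the $\wedge$-clause. The paper's proof is slightly terser but identical in structure.
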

\begin{proof} Since $\delta <n -1$, we have $(0,n-1)\in S$ by Definition \ref{Sorites}.\ref{SoritesA}. Then we have $\mathcal{S}_{n,\delta}, (0,n-1)\Vdash p_0$ by Fact~\ref{Atom}; $\mathcal{S}_{n,\delta}, (0,n-1)\Vdash \neg p_{n-1}$ by Fact \ref{NegAtom}; and finally   $\mathcal{S}_{n,\delta}, (0,n-1)\Vdash \underset{0\leq k\leq n- 2}{\bigwedge} \;\underset{\underset{k+\ell \leq n-1}{1\leq \ell \leq\delta}}{\bigwedge}\neg (p_k\wedge \neg p_{k+\ell})$ by Fact~\ref{NoSharpCutoffs}.\qed\end{proof}

\begin{fact}\label{Consistency} \textnormal{For any $n,\delta\in\mathbb{N}$ with $1\leq \delta < n -1$, the formula \[p_0\wedge \neg p_{n-1}\wedge \underset{0\leq k\leq n- 2}{\bigwedge} \;\underset{\underset{k+\ell \leq n-1}{1\leq\ell\leq\delta}}{\bigwedge}\neg (p_k\wedge \neg p_{k+\ell})\] is consistent in orthologic and compatibility logic.}
\end{fact}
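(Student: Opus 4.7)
The plan is to deduce Fact \ref{Consistency} almost immediately by combining Fact \ref{JointSat} with the soundness statements bundled in Corollary \ref{SoundCor}, with Lemma \ref{EquivSem} mediating between the two semantics. Call the Sorites formula $\varphi$. Fact \ref{JointSat} already gives $\mathcal{S}_{n,\delta},(0,n-1)\Vdash \varphi$ under the fixpoint semantics, so there is genuine content remaining only in translating this satisfaction into non-derivability of a contradiction in each of the two target logics.

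For orthologic, I would argue by contraposition on soundness. Suppose for contradiction that $\varphi\vdash_{\mathsf{O}}\psi$ for every $\psi$; equivalently, $\varphi\vdash_{\mathsf{O}} p_0\wedge\neg p_0$. Since $\between$ on $\mathcal{S}_{n,\delta}$ is reflexive and symmetric, Corollary \ref{SoundCor} gives $\varphi\vDash_{\mathcal{S}_{n,\delta}} p_0\wedge\neg p_0$ under the fixpoint semantics. Combined with Fact \ref{JointSat}, this forces $\mathcal{S}_{n,\delta},(0,n-1)\Vdash p_0\wedge\neg p_0$. But by reflexivity of $\between$ and the truth clause for $\neg$, no state can simultaneously satisfy $p_0$ and $\neg p_0$, contradiction. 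Hence $\varphi$ is consistent in orthologic.

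For compatibility logic, the same strategy works once the semantics is adjusted. Since $\varphi$ contains no occurrence of $\vee$, Lemma \ref{EquivSem} yields $\llbracket\varphi\rrbracket^{\mathcal{S}_{n,\delta}}=\|\varphi\|^{\mathcal{S}_{n,\delta}}$, so Fact \ref{JointSat} upgrades to $\mathcal{S}_{n,\delta},(0,n-1)\vDash\varphi$ under Fine's semantics of Definition \ref{FineTruth}. Corollary \ref{SoundCor} tells us that compatibility logic is sound with respect to $\mathcal{S}_{n,\delta}$ under this semantics, and the same reflexivity argument as before rules out $\mathcal{S}_{n,\delta},(0,n-1)\vDash p_0\wedge\neg p_0$. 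Consistency of $\varphi$ in compatibility logic follows.

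There is no real obstacle here: the genuine work was carried out in Facts \ref{Atom}--\ref{JointSat} and Proposition \ref{Fixpoints}, and the only subtlety is that one must invoke Lemma \ref{EquivSem} to transport the satisfaction witness from $\Vdash$ to $\vDash$ before applying soundness of compatibility logic. That this transport is available is exactly why the Sorites formula was stated with only $\wedge$ and $\neg$ as outer structure over the atoms.
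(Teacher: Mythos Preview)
Your proposal is correct and follows exactly the paper's own proof, which simply cites Fact \ref{JointSat}, Lemma \ref{EquivSem}, and Corollary \ref{SoundCor} as making the result immediate. You have merely unpacked what ``immediate'' means here, correctly identifying that Lemma \ref{EquivSem} is needed to transport the satisfaction witness to Fine's semantics for the compatibility-logic case.
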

\begin{proof} Immediate from Fact \ref{JointSat}, Lemma \ref{EquivSem}, and Corollary \ref{SoundCor}.
\qed\end{proof}

The significance of Fact \ref{Consistency} is that we can consistently deny the existence of sharp cutoffs in a Sorites series by either (a) denying distributivity, while accepting excluded middle, as in orthologic or (b) denying excluded middle, while accepting distributivity, as in  compatibility logic. Which approach is preferable? If we endorse the arguments concerning epistemic modals mentioned in \S~\ref{Intro}, then accepting distributivity as a generally valid principle is not an available path. Still, we are left with the question of whether to accept excluded middle. 

Before turning to that question in \S~\ref{PseudosymmSorites}, let us address another natural question: is there something in common between the case of epistemic modals and the case of vague predicates that explains why an orthological treatment might be appropriate for both? The semantics used in this section provides a hint. Both epistemic modals and vagueness are related to \textit{partial states}, whether due to the partiality of information or the partiality of predicate determination. Moreover, these states arguably violate a key assumption underlying classical semantics identified in \cite{HM2022,Holliday-Mandelkern2022}: that \textit{compatibility implies compossibility}, where two states are compossible if they have a common refinement (recall Theorem~\ref{SpecialCases}.\ref{SpecialCases2}). In the case of epistemic modals, a state satisfying $\Diamond \neg p$ can be compatible with a state satisfying $p$, since $\Diamond \neg p$ does not entail $\neg p$, but they cannot have a common refinement, since such a refinement would satisfy $p\wedge\Diamond \neg p$, an epistemic contradiction (recall \S~\ref{LogicsSection}). In the case of vagueness, in particular in the model of the Sorites in Fig.~\ref{SymModel}, a state satisfying $p_1$ can be compatible with a state satisfying $\neg p_2$ (e.g., $(1,3)$ is compatible with $(0, 2)$), since $p_1$ does not entail $ p_2$, but they cannot have a common refinement, since such a refinement would satisfy $p_1\wedge\neg p_2$, contradicting our prohibition against sharp cutoffs in the Sorites. Thus, the cases of epistemic modals and vagueness are tied together by the relevant states violating the classical dictum that compatibility implies compossibility.

\section{A pseudosymmetric Sorites model}\label{PseudosymmSorites}

In this section, working with the fixpoint semantics of \S~\ref{FixpointSemantics}, we add to the symmetric Sorites model the possibility of \textit{rejecting} $p_k\vee\neg p_k$, for each $k\in n$. Following Definition \ref{AcceptRejectDef}, we say that a state $x$ rejects a formula $\varphi$ if for all $y\compflip x$, we have $\mathcal{M},y\nVdash \varphi$. In symmetric models, this is equivalent to $x$ forcing $\neg\varphi$, but in pseudosymmetric models, it is not. In particular, in pseudosymmetric models, it is possible for a state to reject an instance of excluded middle, as intuitionists may do. In the following construction, $k$ will be the state that rejects $p_k\vee\neg p_k$.\footnote{One could also add, for each interval $I\subseteq n$, a state $x_I$ that rejects $p_k\vee\neg p_k$ for each $k\in I$. But it suffices to make our points here to just add one state for each $p_k$.}

\begin{definition}\label{PseudoSorites} \textnormal{For $n,\delta\in\mathbb{N}$ with $1\leq\delta < n-1$, where $\mathcal{S}_{n,\delta}=(S,\between,V)$ is the symmetric Sorites model for $\mathcal{L}_n$ as in Definition~\ref{Sorites}, we define the \textit{pseudosymmetric Sorites model} $\mathfrak{S}_{n,\delta}=(X,\comp,V)$ as follows:
\begin{enumerate}
\item\label{PseudoSoritesA} $X=S\cup n$;
\item\label{PseudoSoritesB} if $(i,j),(i',j')\in S$, then $(i,j)\comp (i',j')$ iff $(i,j)\between (i',j')$;
\item\label{PseudoSoritesC} if $k\in n$ and $(i,j)\in S$, then $k\comp (i,j)$ iff $i<k<j$;
\item\label{PseudoSoritesD} if $x\in X$ and $k\in n$, then $x\comp k$.
\end{enumerate}
Note that the valuation $V$ in $\mathfrak{S}_{n,\delta}$ is the same as in $\mathcal{S}_{n,\delta}$.}
\end{definition}
The pseudosymmetric Sorites model for $n=4$ and $\delta=1$ is shown in Figure \ref{PseudosymModel}.

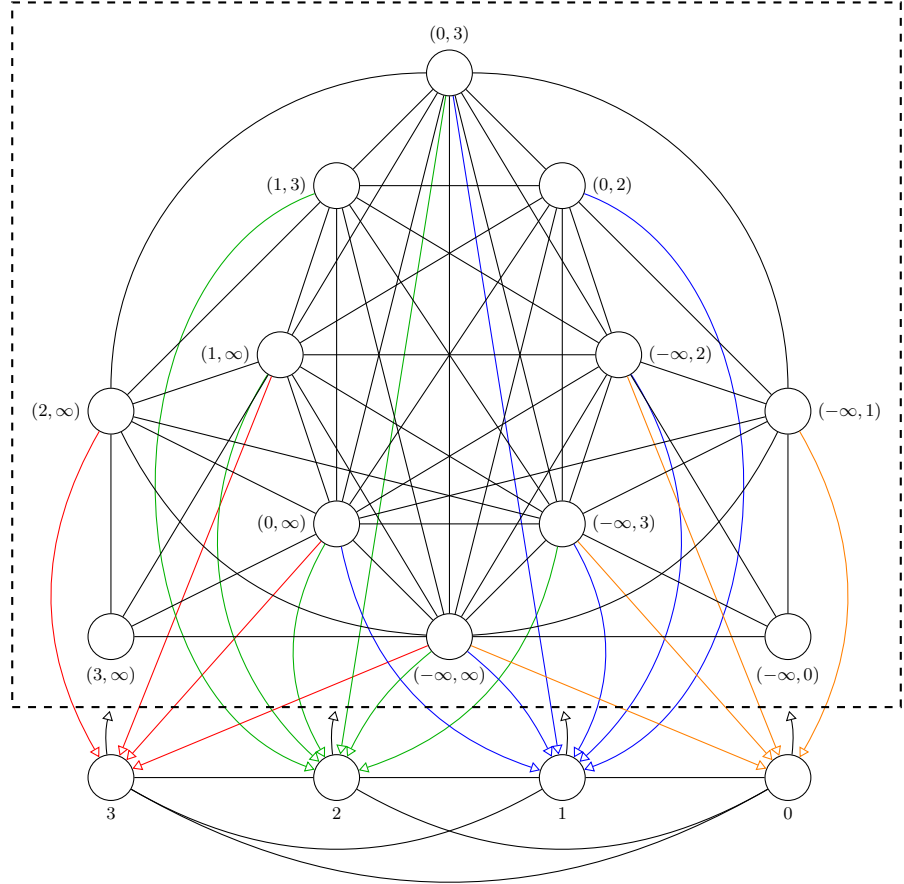
\begin{figure}[!h]
\begin{center}
\begin{tikzpicture}[scale=.75, every node/.style={scale=.75}][->,>=stealth',shorten >=1pt,auto,node distance=2.8cm,
                    semithick]

  \tikzstyle{every state}=[fill=white,draw=black,text=black]

  \node[state] (A) at (-6,0)                    [label=below:{$(3,\infty)$}] {{}};

  \node[state] (B) at (6,0)  [label=below:{$(-\infty, 0)$}] {{}};

  \node[state] (C) at (0,0) [label=below:{$(-\infty, \infty)$}] {{}};

  \node[state] (D) at (6,4)       [label=right:{$(-\infty, 1)$}] {{}};

  \node[state] (E) at (3,5)       [label=right:{$(-\infty, 2)$}] {{}};

  \node[state] (F) at (2,2)  [label=right:{$(-\infty, 3)$}] {{}};

  \node[state] (G) at (-2,2)  [label=left:{$(0, \infty)$}] {{}};

  \node[state] (H) at (2, 8)       [label=right:{$(0, 2)$}] {{}};

  \node[state] (I) at (0,10)  [label=above:{$(0, 3)$}] {{}};

  \node[state] (J) at (-3, 5)      [label=left:{$(1, \infty)$}] {{}};

  \node[state] (K) at (-2, 8)     [label=left:{$(1, 3)$}] {{}};

  \node[state] (L) at (-6, 4)      [label=left:{$(2, \infty)$}] {{}};
  
      \node[state] (M) at (-6, -2.5)      [label=below:{$3$}] {{}};
      \node[state] (N) at (-2, -2.5)      [label=below:{$2$}] {{}};
      \node[state] (O) at (2, -2.5)      [label=below:{$1$}] {{}};
      \node[state] (P) at (6, -2.5)      [label=below:{$0$}] {{}};
      
      \node  (AM) at (-6, -1.2)     {{}};
      \node  (AN) at (-2, -1.2)        {{}};
      \node (AO) at (2, -1.2)       {{}};
      \node  (AP) at (6, -1.2)       {{}};

\path 
	
	(C) edge[-{Triangle[open]},red] node {} (M)
	(G) edge[-{Triangle[open]},red] node {} (M)
	(J) edge[-{Triangle[open]},red] node {} (M)
	(L) edge[-{Triangle[open]},bend right,red] node {} (M)

	(C) edge[-{Triangle[open]},bend right=12,darkgreen] node {} (N)
	(J) edge[-{Triangle[open]},bend right=40,darkgreen] node {} (N)
	(F) edge[-{Triangle[open]},bend left,darkgreen] node {} (N)
	(G) edge[-{Triangle[open]},bend right,darkgreen] node {} (N)
	(K) edge[-{Triangle[open]},darkgreen,bend right=70] node {} (N)
	(I) edge[-{Triangle[open]},darkgreen] node {} (N)
	
	(C) edge[-{Triangle[open]},bend left=12,blue] node {} (O)
	(E) edge[-{Triangle[open]},bend left=40,blue] node {} (O)
	(F) edge[-{Triangle[open]},bend left,blue] node {} (O)
	(G) edge[-{Triangle[open]},bend right,blue] node {} (O)
	(H) edge[-{Triangle[open]},bend left=70,blue] node {} (O)
	(I) edge[-{Triangle[open]},blue] node {} (O)
	
	(C) edge[-{Triangle[open]},orange] node {} (P)
	(D) edge[-{Triangle[open]},bend left,orange] node {} (P)
	(E) edge[-{Triangle[open]},orange] node {} (P)
	(F) edge[-{Triangle[open]},orange] node {} (P)
	
	(M) edge[-{Triangle[open]},bend left=10] node {} (AM)
	(N) edge[-{Triangle[open]},bend left=10] node {} (AN)
	(O) edge[-{Triangle[open]},bend right=10] node {} (AO)
	(P) edge[-{Triangle[open]},bend right=10] node {} (AP)
	
	(M) edge[-] node {} (N)
	(M) edge[-,bend right] node {} (O)
	(M) edge[-,bend right] node {} (P)
	
	(N) edge[-] node {} (O)
	(N) edge[-,bend right] node {} (P)
	
	(O) edge[-] node {} (P)

    (A) edge[-]              node {} (C)
        edge[-]              node {} (G)
        edge[-]              node {} (J)
        edge[-]              node {} (L)
    (B) edge[-]              node {} (C)
        edge[-]              node {} (D)
        edge[-]              node {} (E)
        edge[-]              node {} (F)
    (C) edge[-, bend right]  node {} (D)
        edge[-]              node {} (E)
        edge[-]              node {} (F)
        edge[-]              node {} (G)
        edge[-]              node {} (H)
        edge[-]              node {} (I)
        edge[-]              node {} (J)
        edge[-]              node {} (K)
        edge[-, bend left]   node {} (L)
    (D) edge[-]              node {} (E)
        edge[-]              node {} (F)
        edge[-]              node {} (G)
        edge[-]              node {} (H)
        edge[-, bend right=45] node {} (I)
    (E) edge[-]              node {} (F)
        edge[-]              node {} (G)
        edge[-]              node {} (H)
        edge[-]              node {} (I)
        edge[-]              node {} (J)
        edge[-]              node {} (K)
    (F) edge[-]              node {} (G)
        edge[-]              node {} (H)
        edge[-]              node {} (I)
        edge[-]              node {} (J)
        edge[-]              node {} (K)
        edge[-]              node {} (L)
    (G) edge[-]              node {} (H)
        edge[-]              node {} (I)
        edge[-]              node {} (J)
        edge[-]              node {} (K)
        edge[-]              node {} (L)
    (H) edge[-]              node {} (I)
        edge[-]              node {} (J)
        edge[-]              node {} (K)
    (I) edge[-, bend right=45] node {} (L)
        edge[-]              node {} (J)
        edge[-]              node {} (K)
    (J) edge[-]              node {} (K)
        edge[-]              node {} (L)
    (K) edge[-]              node {} (L);
    
  \useasboundingbox (-7.75,-1.25) rectangle (8,11.25); 
  \draw [thick, dashed] (-7.75,-1.25) rectangle (8,11.25);
 
\end{tikzpicture}
\end{center}
\caption{The pseudosymmetric Sorites model for $n=4$ and $\delta=1$. We indicate $x\comp y$ by an arrow from $y$ to $x$. The arrow from $k$ to the dashed rectangle indicates arrows from $k$ to all states inside the rectangle. An edge with no arrow head indicates arrows in both direction.}\label{PseudosymModel}
\end{figure}

\begin{proposition}\label{Fixpoint2} \textnormal{$\mathfrak{S}_{n,\delta}$ is a fixpoint model.}
\end{proposition}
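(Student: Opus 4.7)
The goal is to verify, for each atomic $p_k$ with $k\in n$, that $V(p_k)$ is a fixpoint of $c_\comp$. Equivalently, I will show that whenever $x\in X$ with $x\notin V(p_k)$, there is some $x'\comp x$ such that every $x''\compflip x'$ satisfies $x''\notin V(p_k)$. My strategy is to reuse, and adapt to the now asymmetric relation, the witness $(-\infty,k)$ that worked in the proof of Proposition~\ref{Fixpoints}, splitting on whether $x$ comes from the old set $S$ or from the new set $n\subseteq X$.

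In the first case, suppose $x=(i,j)\in S$ and $(i,j)\notin V(p_k)$, so $i<k$. Since $\comp$ restricted to $S\times S$ coincides with $\between$ by clause~\ref{PseudoSorites}.\ref{PseudoSoritesB}, the computation from Proposition~\ref{Fixpoints} gives $(-\infty,k)\comp(i,j)$. I then enumerate the $\compflip$-predecessors $x''$ of $(-\infty,k)$ by splitting on whether $x''\in S$ or $x''\in n$: if $x''=(i'',j'')\in S$, clause~\ref{PseudoSorites}.\ref{PseudoSoritesB} forces $i''<k$, whence $x''\notin V(p_k)$ by Definition~\ref{Sorites}.\ref{SoritesC}; and if $x''=m\in n$, then $m$ is trivially a $\compflip$-predecessor by clause~\ref{PseudoSorites}.\ref{PseudoSoritesD}, but $m\notin S\supseteq V(p_k)$, so $m\notin V(p_k)$ as well.

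In the second case, $x=m\in n$, so $x\notin V(p_k)$ comes for free from the observation $V(p_k)\subseteq S$ and $m\notin S$. I take the same witness $x'=(-\infty,k)$; clause~\ref{PseudoSorites}.\ref{PseudoSoritesD} immediately yields $(-\infty,k)\comp m$, and the enumeration of $\compflip$-predecessors of $(-\infty,k)$ is exactly the one carried out in the previous paragraph.

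The only delicate point is the passage from the symmetric to the pseudosymmetric structure: since $\comp$ is no longer its own converse on $X$, I must be careful to read clauses~\ref{PseudoSorites}.\ref{PseudoSoritesC} and~\ref{PseudoSorites}.\ref{PseudoSoritesD} in the correct direction when computing the $\compflip$-predecessors of $(-\infty,k)$, and to check that the new points $m\in n$ fit the required pattern. Once this bookkeeping is in place, the arithmetic reduces to what was already done for the symmetric Sorites model, so I anticipate no real obstacle beyond this case split.
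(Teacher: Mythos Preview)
Your proposal is correct and follows essentially the same approach as the paper: in both cases you use the witness $(-\infty,k)$, verify $(-\infty,k)\comp x$ via the appropriate clause of Definition~\ref{PseudoSorites}, and then handle the $\compflip$-predecessors of $(-\infty,k)$ by the same two-way split between elements of $S$ and elements of $n$. The paper's proof is organized identically, so there is nothing to add.
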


\begin{proof} We adapt the proof of Proposition \ref{Fixpoints} to show that $V(p_k)$ is a fixpoint of $c_\comp$. Suppose $x\not\in V(p_k)$.

Case 1: $x=(i,j)\in S$. Hence $i<j$ by Definition~\ref{Sorites}.\ref{SoritesA}. Since $(i,j)\not\in V(p_k)$, we have $i<k$ by Definition~\ref{Sorites}.\ref{SoritesC}. Let $(i',j')=(-\infty, k)$, so $(i',j')\in S$ by Definition~\ref{Sorites}.\ref{SoritesA}. Then $\mathrm{max}(i,i')=i<\mathrm{min}(j,j')$, so $(i',j')\between (i,j)$ by Definition~\ref{Sorites}.\ref{SoritesB} and hence $(i',j')\comp (i,j)$ by Definition~\ref{PseudoSorites}.\ref{PseudoSoritesB}. Now consider any $y\compflip (i',j')$. If $y=(i'',j'')\in S$, then $(i'',j'')\compflip (i',j')$ implies $(i'',j'')\between (i',j')$ by Definition~\ref{PseudoSorites}.\ref{PseudoSoritesB}, so we have $i''<j'=k$ by Definition \ref{Sorites}.\ref{SoritesB}, so $(i'',j'')\not\in V(p_k)$ by Definition \ref{Sorites}.\ref{SoritesC}. On the other hand, if $y=\ell \in n$, then $\ell \not\in V(p_k)$ since none of the new states from $n$ are in any $V(p_k)$. Thus, we have shown there is an $(i',j')\comp (i,j)$ such that for all $y\compflip (i',j')$, we have $y\not\in V(p_k)$.

Case 2: $x=\ell \in n$. Then again we take $(i',k')=(-\infty, k)$, so $(i',k')\comp \ell$ by Definition \ref{PseudoSorites}.\ref{PseudoSoritesD}. Then the rest of the argument is as in Case 1.

In either case, there is an $(i',j')\comp x$ such that for all $y\compflip (i',j')$, we have $y\not\in V(p_k)$. This shows that $V(p_k)$ is a fixpoint of $c_\comp$.\qed\end{proof}

Though the addition of the new states in $n$ breaks the symmetry of $\comp$ in $\mathfrak{S}_{n,\delta}$, we retain the property of pseudosymmetry needed for fundamental logic.

\begin{proposition} \textnormal{In $\mathfrak{S}_{n,\delta}=(X,\comp,V)$, $\comp$ is reflexive and pseudosymmetric, so fundamental logic is sound with respect to $\mathfrak{S}_{n,\delta}$ according to the semantics of Definition~\ref{TruthDef}.}
\end{proposition}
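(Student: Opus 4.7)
The plan is to check reflexivity and pseudosymmetry of $\comp$ in $\mathfrak{S}_{n,\delta}$ and then invoke Theorem~\ref{FundamentalSoundComplete}. Both properties decompose naturally according to whether the states involved lie in $S$ or in $n$, and two structural facts about the model do most of the work: $\between$ is reflexive and symmetric on $S$, and by Definition~\ref{PseudoSorites}.\ref{PseudoSoritesD} every element of $X$ is a $\comp$-predecessor of every $k \in n$.

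For reflexivity, I would observe that any $(i,j) \in S$ satisfies $i + \delta < j$ by Definition~\ref{Sorites}.\ref{SoritesA}, so $\max(i,i) = i < j = \min(j,j)$ using $\delta \geq 1$; hence $(i,j) \between (i,j)$ by Definition~\ref{Sorites}.\ref{SoritesB}, giving $(i,j) \comp (i,j)$ by Definition~\ref{PseudoSorites}.\ref{PseudoSoritesB}. For $k \in n$, the relation $k \comp k$ is a direct instance of Definition~\ref{PseudoSorites}.\ref{PseudoSoritesD}.

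For pseudosymmetry, given $y \comp x$ I would produce a witness $z$ with $z \compflip y$ that pre-refines $x$ by splitting on the type of $x$. If $x = (i,j) \in S$, I would take $z = x$: it pre-refines itself trivially, and the required relation between $z$ and $y$ follows from the hypothesis together with either the symmetry of $\between$ on $S$ (when $y \in S$) or the unconditional $(i,j) \comp k$ given by Definition~\ref{PseudoSorites}.\ref{PseudoSoritesD} (when $y = k \in n$). If $x = k \in n$, then by Definition~\ref{PseudoSorites}.\ref{PseudoSoritesD} the $\comp$-predecessor set of $k$ is all of $X$, so every state $z$ pre-refines $k$ vacuously; taking $z = y$ and appealing to reflexivity discharges the remaining relational requirement.

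The main obstacle is really just staying organized across the two kinds of states and the asymmetry that the new elements introduce; no case requires any delicate combinatorics beyond the observations about $\between$ and Definition~\ref{PseudoSorites}.\ref{PseudoSoritesD} noted above. Once reflexivity and pseudosymmetry are in hand, Theorem~\ref{FundamentalSoundComplete} immediately yields the soundness of fundamental logic with respect to $\mathfrak{S}_{n,\delta}$ under the semantics of Definition~\ref{TruthDef}.
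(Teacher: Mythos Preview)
Your proposal is correct and follows essentially the same approach as the paper: both arguments exploit that $\comp$ is symmetric on every pair except possibly in the direction $(i,j)\comp k$, and handle the residual case by using Definition~\ref{PseudoSorites}.\ref{PseudoSoritesD} to see that every state pre-refines each $k\in n$. The only difference is organizational—you split on the type of $x$ and choose $z=x$ or $z=y$ accordingly, whereas the paper first dispatches all pairs on which $\comp$ is already symmetric and then isolates $(i,j)\comp k$ as the single case requiring the witness $z=(i,j)$; your case $x=k\in n$ with $z=y$ is exactly this step.
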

\begin{proof} Reflexivity is obvious. For pseudosymmetry, by Definition \ref{PseudoSorites}.\ref{PseudoSoritesB}, the restriction of $\comp$ to $S$ is symmetric; by Definition \ref{PseudoSorites}.\ref{PseudoSoritesD}, the restriction of $\comp$ to $n$ is symmetric; and by Definition \ref{PseudoSorites}.\ref{PseudoSoritesD}, $k \comp (i,j)$ implies $(i,j)\comp k$. Thus, we need only consider the case where $(i,j)\comp k$. In this case, we simply observe that $(i,j)\compflip (i,j)$ by Definitions \ref{PseudoSorites}.\ref{PseudoSoritesB} and \ref{Sorites}.\ref{SoritesB}, and $(i,j)$ pre-refines $k$ by Definition~\ref{PseudoSorites}.\ref{PseudoSoritesD}. This establishes pseudosymmetry.\qed\end{proof}

We now prove a series of facts analogous to those for $\mathcal{S}_{n,\delta}$  in \S~\ref{SymmSorites}.

\begin{fact}\label{Atom2} \textnormal{For any $k\in n$, $\llbracket  p_k\rrbracket^{\mathfrak{S}_{n,\delta}}  = \{ ( i,j ) \in S\mid k\leq i\}$.} 
\end{fact}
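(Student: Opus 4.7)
The proof should be essentially immediate, mirroring the proof of Fact \ref{Atom} for the symmetric model. The plan is to unpack three definitions in sequence.

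First, I would invoke the atomic clause of the truth definition (Definition \ref{TruthDef}.\ref{TruthDefAtom}): for any state $x\in X$ in $\mathfrak{S}_{n,\delta}$, we have $\mathfrak{S}_{n,\delta},x\Vdash p_k$ iff $x\in V(p_k)$. Hence $\llbracket p_k\rrbracket^{\mathfrak{S}_{n,\delta}} = V(p_k)$.

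Next, I would point to the note at the end of Definition \ref{PseudoSorites}, which states that the valuation $V$ in $\mathfrak{S}_{n,\delta}$ is the same as the valuation in $\mathcal{S}_{n,\delta}$. By Definition \ref{Sorites}.\ref{SoritesC}, this valuation is $V(p_k)=\{(i,j)\in S\mid k\leq i\}$, a subset of $S$. In particular, none of the newly added states $\ell\in n\subseteq X$ belong to $V(p_k)$, so expanding the state space from $S$ to $X=S\cup n$ does not introduce any new satisfiers. Combining this with the previous observation gives $\llbracket p_k\rrbracket^{\mathfrak{S}_{n,\delta}}=\{(i,j)\in S\mid k\leq i\}$, as required.

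There is no real obstacle here: the fact is a direct unwinding of the definition of $V$ and the atomic clause of forcing, together with the remark that the new states are not placed in the extension of any $p_k$ (the same remark used in Case 2 of the proof of Proposition \ref{Fixpoint2}). The only thing to flag explicitly is that ``$(i,j)\in S$'' on the right-hand side correctly excludes the new rejecting states $\ell\in n$ from the extension of $p_k$, which is exactly why the formula for the extension is identical to the one in Fact~\ref{Atom}.
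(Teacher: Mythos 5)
Your proof is correct and follows exactly the paper's own (one-line) argument: unwind Definition \ref{TruthDef}.\ref{TruthDefAtom}, the remark in Definition \ref{PseudoSorites} that $V$ is unchanged, and Definition \ref{Sorites}.\ref{SoritesC}, noting that the new states in $n$ lie outside every $V(p_k)$. Nothing is missing.
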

\begin{proof} Immediate from  Definitions \ref{TruthDef}.\ref{TruthDefAtom}, \ref{PseudoSorites}, and \ref{Sorites}.\ref{SoritesC}.
\end{proof}

\begin{fact}\label{NegAtom2} \textnormal{For any $k\in n$, \[\mbox{$\llbracket \neg p_k\rrbracket^{\mathfrak{S}_{n,\delta}} = \{(i,j)\in S\mid  j\leq k\}$ and $\llbracket \neg\neg p_k\rrbracket^{\mathfrak{S}_{n,\delta}} =\llbracket p_k\rrbracket^{\mathfrak{S}_{n,\delta}}$.}\]}
\end{fact}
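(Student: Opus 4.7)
The plan is to argue by cases on the state $x$, separating $x \in S$ (where the symmetric calculation of Fact \ref{NegAtom} nearly goes through verbatim, modulo the extra $\comp$-neighbors from $n$) from $x \in n$ (where we exploit the fact that every state is open to every $\ell \in n$).

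For the first equation, $\llbracket \neg p_k \rrbracket^{\mathfrak{S}_{n,\delta}} = \{(i,j) \in S \mid j \leq k\}$, first I would observe that since $V(p_k) \subseteq S$, only pair-states in $V(p_k)$ can witness that $\neg p_k$ fails at $x$. Therefore for $x=(i,j)\in S$ the condition reduces exactly to: no $(i',j') \between (i,j)$ satisfies $k\leq i'$, and the calculation from Fact \ref{NegAtom} applies unchanged (using $(k,\infty) \between (i,j)$ as the refuter when $k<j$). For $x = \ell \in n$, Definition \ref{PseudoSorites}.\ref{PseudoSoritesD} gives $(k,\infty) \comp \ell$, and $(k,\infty) \in V(p_k)$, so $\ell \not\Vdash \neg p_k$. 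This handles both inclusions in the first claim.

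For the second equation, unfolding $\llbracket \neg\neg p_k \rrbracket$ using the first part, $x \Vdash \neg\neg p_k$ iff no $y \comp x$ lies in $\{(i'',j'')\in S \mid j''\leq k\}$. For $x=\ell\in n$, take $y=(-\infty,0)$: it is in $S$ (by $\delta\geq 1$), we have $y\comp \ell$ by Definition \ref{PseudoSorites}.\ref{PseudoSoritesD}, and $0\leq k$ puts $y\in\llbracket \neg p_k\rrbracket$, so $\ell \not\Vdash \neg\neg p_k$; and $\ell \notin V(p_k) = \llbracket p_k\rrbracket$, so the two sets agree at $\ell$. For $x=(i,j)\in S$, the elements of $n$ never obstruct (they are not in $\llbracket \neg p_k\rrbracket$), so the question is whether some $(i'',j'') \between (i,j)$ satisfies $j''\leq k$. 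If $k\leq i$, then any $(i'',j'')\between(i,j)$ has $i<j''$, hence $k\leq i<j''$, so no such obstruction exists; if $i<k$, the witness $(-\infty,k)$ belongs to $S$ and satisfies $(-\infty,k)\between(i,j)$ since $i<\min(j,k)$. This makes $(i,j)\Vdash \neg\neg p_k$ equivalent to $k\leq i$, matching Fact \ref{Atom2}.

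The only mildly delicate point is the asymmetry introduced by the new states $\ell \in n$: one has to track both directions of $\comp$ separately, since $x\comp \ell$ always holds while $\ell \comp (i,j)$ requires $i<\ell<j$. Fortunately, the clauses for $\neg$ only consult out-neighbors of $x$, and the fact that the states in $n$ are semantically ``blank'' for atoms (not in any $V(p_k)$) and, by the first part, also for negated atoms, means their role reduces to showing that $\ell$ itself forces neither $\neg p_k$ nor $\neg\neg p_k$. So no real new work is needed beyond adapting the $(k,\infty)$- and $(-\infty,k)$-witnesses from Facts \ref{Atom} and \ref{NegAtom}.
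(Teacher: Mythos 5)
Your proposal is correct and follows essentially the same route as the paper's proof: reduce the first equation to the computation in Fact \ref{NegAtom} using that the states in $n$ lie outside every $V(p_k)$ and that $x \comp \ell$ for all $x$, then establish the second equation by the same $(-\infty,k)$-style witnesses, treating the states $(i,j)\in S$ and $\ell\in n$ separately. The only cosmetic difference is your choice of $(-\infty,0)$ rather than $(-\infty,k)$ as the witness showing $\ell\nVdash\neg\neg p_k$; both work.
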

\begin{proof} The proof of the first equation is essentially the same as that of Fact \ref{NegAtom}, only using the facts that (i) none of the new states $\ell\in n$ belong to any $V(p_k)$, and (ii) the new states $\ell\in n$ are such that $x\comp \ell$ for all $x\in X$.

 For the second equation, we have $\llbracket  p_k\rrbracket^{\mathfrak{S}_{n,\delta}}  = \{ ( i,j ) \in S\mid k\leq i\}$ by Fact~\ref{Atom2}. Now if $(i,j)\in S$ and $k\leq i$, then for any $(i',j')\comp (i,j)$, we have $k<j'$ by Definitions \ref{PseudoSorites}.\ref{PseudoSoritesB} and \ref{Sorites}.\ref{SoritesB}, so $(i',j')\not\in \llbracket \neg p_k\rrbracket^{\mathfrak{S}_{n,\delta}} $, which shows that ${(i,j)\in \llbracket \neg\neg p_k\rrbracket^{\mathfrak{S}_{n,\delta}}}$. Conversely, if $i<k$, then $(-\infty,k)\comp (i,j)$ by Definitions \ref{PseudoSorites}.\ref{PseudoSoritesB} and \ref{Sorites}.\ref{SoritesB}, and $(-\infty,k)\in \llbracket \neg p_k\rrbracket^{\mathfrak{S}_{n,\delta}}$, so $(i,j)\not\in \llbracket \neg\neg p_k\rrbracket^{\mathfrak{S}_{n,\delta}}$. Similarly, for each $k\in n$, we have $(-\infty, k)\comp$ by Definition \ref{PseudoSorites}.\ref{PseudoSoritesD}, so $k\not\in \llbracket \neg\neg p_k\rrbracket^{\mathfrak{S}_{n,\delta}}$.\qed\end{proof}

\begin{fact}\label{NoSharpCutoffs2} \textnormal{For any $k\in n$ and $\ell \in \delta+1$ with $k+\ell\leq n-1$, \[\mbox{$\llbracket p_k\wedge \neg p_{k+\ell}\rrbracket^{\mathfrak{S}_{n,\delta}}= \varnothing$ and hence $\llbracket \neg (p_k\wedge \neg p_{k+\ell})\rrbracket^{\mathfrak{S}_{n,\delta}}= X$.}\]}
\end{fact}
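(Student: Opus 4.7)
\textbf{Plan for Fact \ref{NoSharpCutoffs2}.} The plan is to transcribe the proof of Fact \ref{NoSharpCutoffs} almost verbatim, using the pseudosymmetric counterparts (Facts \ref{Atom2} and \ref{NegAtom2}) in place of Facts \ref{Atom} and \ref{NegAtom}. Concretely, for the first equation I would take an arbitrary state $x \in X$ and show $x \notin \llbracket p_k \wedge \neg p_{k+\ell}\rrbracket^{\mathfrak{S}_{n,\delta}}$ by cases. If $x = m \in n$ is one of the new ``rejecter'' states, then $m \notin V(p_k)$ (the new states lie outside every $V(p_k)$), so $m \notin \llbracket p_k\rrbracket^{\mathfrak{S}_{n,\delta}}$ by Fact~\ref{Atom2} and hence $m$ is not in the conjunction. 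If $x = (i,j) \in S$ and $x \in \llbracket p_k\rrbracket^{\mathfrak{S}_{n,\delta}}$, then Fact~\ref{Atom2} gives $k \leq i$; combined with $i + \delta < j$ from Definition~\ref{Sorites}.\ref{SoritesA} and the hypothesis $\ell \leq \delta$, we get $k + \ell \leq k + \delta \leq i + \delta < j$, so $k+\ell < j$, which by Fact~\ref{NegAtom2} means $(i,j) \notin \llbracket \neg p_{k+\ell}\rrbracket^{\mathfrak{S}_{n,\delta}}$.

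For the second equation, I would use Lemma~\ref{FixpointLemma} together with emptiness: by Definition~\ref{TruthDef}, $\mathcal{M},x \Vdash \neg(p_k \wedge \neg p_{k+\ell})$ iff no $x' \comp x$ forces $p_k \wedge \neg p_{k+\ell}$; since $\llbracket p_k \wedge \neg p_{k+\ell}\rrbracket^{\mathfrak{S}_{n,\delta}} = \varnothing$ this holds vacuously for every $x \in X$, whence $\llbracket \neg(p_k \wedge \neg p_{k+\ell})\rrbracket^{\mathfrak{S}_{n,\delta}} = X$.

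There is really no obstacle here: the only thing that needs checking beyond the symmetric case is that the newly adjoined rejecter states $m \in n$ do not accidentally accept $p_k$, which is immediate from the fact that the valuation $V$ in $\mathfrak{S}_{n,\delta}$ agrees with $V$ in $\mathcal{S}_{n,\delta}$ (Definition~\ref{PseudoSorites}) and so contains no points from $n$. The arithmetic inequality $k + \ell \leq k + \delta < j$ is exactly the one used in Fact~\ref{NoSharpCutoffs}, and the universal-quantifier step for $\neg$ uses nothing about $\comp$ beyond its existence.
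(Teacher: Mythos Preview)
Your proposal is correct and follows exactly the approach the paper takes: the paper's proof simply says it is the same as that of Fact~\ref{NoSharpCutoffs}, noting only that none of the new states from $n$ belong to any $V(p_k)$. Your case split and the explicit handling of the ``hence'' clause are just a slightly more detailed version of the same argument.
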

\begin{proof} The proof is the same as that of Fact \ref{NoSharpCutoffs}, only using the fact that none of the new states $k\in n$ belong to any $V(p_k)$.
\qed\end{proof}
\noindent Thus, the analogue of Fact \ref{JointSat} for $\mathfrak{S}_{n,\delta}$ also holds.

What is new about $\mathfrak{S}_{n,\delta}$ is that excluded middle can now be used to define the set of states at which the status of $p_k$ is settled. 

\begin{fact}\label{EMExt} \textnormal{For any $k\in n$, we have $\llbracket p_k\vee\neg p_k\rrbracket^{\mathfrak{S}_{n,\delta}}=\{(i,j)\in S\mid k\leq i\mbox{ or }j\leq k\}$.}
\end{fact}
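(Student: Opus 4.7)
My plan is to reduce the fact to showing that the set $A := \{(i,j) \in S \mid k \leq i \text{ or } j \leq k\}$, which by Facts \ref{Atom2} and \ref{NegAtom2} equals $\llbracket p_k\rrbracket^{\mathfrak{S}_{n,\delta}} \cup \llbracket \neg p_k\rrbracket^{\mathfrak{S}_{n,\delta}}$, is a fixpoint of $c_\comp$ in $\mathfrak{S}_{n,\delta}$. Indeed, unpacking Definition \ref{TruthDef}.\ref{TruthDefOr} gives $\llbracket p_k \vee \neg p_k\rrbracket^{\mathfrak{S}_{n,\delta}} = c_\comp(\llbracket p_k\rrbracket^{\mathfrak{S}_{n,\delta}} \cup \llbracket \neg p_k\rrbracket^{\mathfrak{S}_{n,\delta}}) = c_\comp(A)$, so once $A = c_\comp(A)$ is established, the fact follows.

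The inclusion $A \subseteq c_\comp(A)$ is automatic from the definition of $c_\comp$. For the reverse inclusion I will argue by contrapositive: for each $x \notin A$, produce an $x' \comp x$ such that no $x'' \compflip x'$ lies in $A$. The witness in every case will be $x' = k$. There are two types of $x \notin A$ to handle: first, $x = (i,j) \in S$ with $i < k < j$, for which $k \comp (i,j)$ by Definition \ref{PseudoSorites}.\ref{PseudoSoritesC}; second, $x = \ell \in n$, for which $k \comp \ell$ by Definition \ref{PseudoSorites}.\ref{PseudoSoritesD}.

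Now I need to check that every $x''$ with $k \comp x''$ lies outside $A$. By Definitions \ref{PseudoSorites}.\ref{PseudoSoritesC} and \ref{PseudoSorites}.\ref{PseudoSoritesD}, $k \comp x''$ forces $x''$ to be either some $\ell' \in n$ (hence outside $S$ and therefore outside $A$) or a pair $(i',j') \in S$ satisfying $i' < k < j'$ (hence neither $k \leq i'$ nor $j' \leq k$, so again outside $A$). This completes the proof that $A$ is a fixpoint and hence equals $\llbracket p_k \vee \neg p_k \rrbracket^{\mathfrak{S}_{n,\delta}}$.

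I do not expect a real obstacle here: the model was essentially designed so that the extra state $k$ witnesses the rejection of $p_k \vee \neg p_k$ from any state that does not already settle one of the two disjuncts, and the whole argument is a careful bookkeeping of which pairs $(i,j)$ are $\comp$-related to $k$. The only thing to watch is to treat both halves of $X = S \cup n$ uniformly when producing the witness $x' = k$, so that the same closure computation rules out the new states $\ell \in n$ from $\llbracket p_k \vee \neg p_k \rrbracket^{\mathfrak{S}_{n,\delta}}$ as desired.
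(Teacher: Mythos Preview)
Your proof is correct and follows essentially the same approach as the paper: both use the witness $x'=k$ to show that any state not already settling $p_k$ fails to force $p_k\vee\neg p_k$, by checking that every $x''\compflip k$ lies outside $\llbracket p_k\rrbracket\cup\llbracket\neg p_k\rrbracket$. Your framing via the fixpoint equation $A=c_\comp(A)$ is just an unpacking of the paper's direct appeal to Definition~\ref{TruthDef}.\ref{TruthDefOr}, and you are in fact slightly more careful than the paper in noting that $x''$ can be \emph{any} $\ell'\in n$, not only $k$ itself.
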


\begin{proof} The right-to-left inclusion is immediate from Facts \ref{Atom2} and \ref{NegAtom2}. For the left-to-right inclusion, suppose $x\in X\setminus \{(i,j)\in S\mid k\leq i\mbox{ or }j\leq k\}$. Then either $x\in n$ or $x=(i,j)\in S$ and  $i<k<j$. In either case, we have $k\comp x$ by Definitions \ref{PseudoSorites}.\ref{PseudoSoritesC} and \ref{PseudoSorites}.\ref{PseudoSoritesD}.  Moreover, for any $y\compflip k$, either $y=k$ by Definition~\ref{PseudoSorites}.\ref{PseudoSoritesD} or $y=(i',j')$ where $i'<k<j'$ by Definition~\ref{PseudoSorites}.\ref{PseudoSoritesC}, and in either case, $y\not\in \llbracket  p_k\rrbracket^{\mathfrak{S}_{n,\delta}} \cup \llbracket  \neg p_k\rrbracket^{\mathfrak{S}_{n,\delta}} $ by Facts \ref{Atom2} and \ref{NegAtom2}. It follows by Definition \ref{TruthDef}.\ref{TruthDefOr} that $x\not\in \llbracket p_k\vee\neg p_k\rrbracket^{\mathfrak{S}_{n,\delta}}$.\qed\end{proof}

In addition, $\mathfrak{S}_{n,\delta}$ exhibits one of Fine's \cite[p.~44ff]{Fine2020} main claims about the ``global'' nature of indeterminacy in a Sorites series: although negating a single instance of excluded middle is of course contradictory, when dealing with a Sorites series we should be able to negate a conjunction of distinct instances of excluded middle or even weak excluded middle (i.e., $\neg\varphi\vee\neg\neg\varphi$).\footnote{Fine \cite[p.~34]{Fine2020} suggests that ``under fairly innocuous logical assumptions'', the negation of the conjunction of instances of excluded middle is equivalent to the conjunction of negations of state descriptions of the form $p_1\wedge\dots\wedge p_k\wedge \neg p_{k+1}\wedge\dots\wedge\neg p_n$. But orthologic makes the former unsatisfiable while the latter satisfiable (Fact \ref{JointSat}), and arguably distributivity is not an innocuous logical assumption.}

\begin{fact}\label{DenyLEMs} \textnormal{In $\mathfrak{S}_{n,\delta}$, we have
\begin{eqnarray*}
&&\llbracket(p_0\vee\neg p_0)\wedge\dots\wedge (p_{n-1}\vee\neg p_{n-1})\rrbracket^{\mathfrak{S}_{n,\delta}}\\&=&\llbracket(\neg p_0\vee\neg \neg p_0)\wedge\dots\wedge (\neg p_{n-1}\vee\neg \neg p_{n-1})\rrbracket^{\mathfrak{S}_{n,\delta}}\\
&=&\{(n-1,\infty), (-\infty,n-1)\},
\end{eqnarray*} so 
\begin{eqnarray*}
&&\llbracket \neg ((p_0\vee\neg p_0)\wedge\dots\wedge (p_{n-1}\vee\neg p_{n-1}))\rrbracket^{\mathfrak{S}_{n,\delta}}\\
&=&\llbracket \neg ((\neg p_0\vee\neg\neg p_0)\wedge\dots\wedge (\neg p_{n-1}\vee\neg\neg p_{n-1}))\rrbracket^{\mathfrak{S}_{n,\delta}}\\
&=&\{(i,j)\in S \mid i,j\in\mathbb{N} \}.
\end{eqnarray*}}
\end{fact}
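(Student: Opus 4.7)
My plan is to reduce each of the three equalities to the facts already in hand and then verify the remaining claims by direct case analysis.

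For the first equality, I apply Fact \ref{EMExt} to each conjunct: a state $(i,j)\in S$ lies in the extension of the $n$-fold conjunction iff for every $k\in n$ we have $k\leq i$ or $j\leq k$, equivalently, the open interval between $i$ and $j$ contains no element of $n=\{0,\dots,n-1\}$. Using Definition \ref{Sorites}.\ref{SoritesA} together with $\delta\geq 1$, the case $i,j\in n$ is excluded because $i+1\in n$ would lie strictly between $i$ and $j$; the remaining case $i=-\infty$ forces $j\leq 0$, and the case $j=\infty$ forces $i\geq n-1$, leaving the two surviving states.

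For the second equality, Fact \ref{NegAtom2} gives $\llbracket\neg\neg p_k\rrbracket^{\mathfrak{S}_{n,\delta}}=\llbracket p_k\rrbracket^{\mathfrak{S}_{n,\delta}}$, so $\llbracket\neg p_k\rrbracket\cup\llbracket\neg\neg p_k\rrbracket=\llbracket p_k\rrbracket\cup\llbracket\neg p_k\rrbracket$. Since the disjunction clause of Definition \ref{TruthDef}.\ref{TruthDefOr} depends only on the union of the extensions of the two disjuncts, we get $\llbracket p_k\vee\neg p_k\rrbracket=\llbracket\neg p_k\vee\neg\neg p_k\rrbracket$ for each $k$, and the extension of the second conjunction matches that of the first.

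For the third equality, let $\psi$ denote the common conjunction, so $\llbracket\psi\rrbracket$ is the two-element set computed above. By the clause for $\neg$, $x\in\llbracket\neg\psi\rrbracket$ iff no $y$ with $y\comp x$ belongs to $\llbracket\psi\rrbracket$. For the inclusion $\supseteq$, I check directly from Definitions \ref{PseudoSorites} and \ref{Sorites}.\ref{SoritesB} that if $(i,j)\in S$ with $i,j\in\mathbb{N}$ then neither witness is open to $(i,j)$: compatibility with the $(\cdot,\infty)$ witness would require $j>n-1$, and compatibility with the $(-\infty,\cdot)$ witness would require $i$ strictly less than its upper coordinate, and both fail given $0\leq i$ and $j\leq n-1$. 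For the reverse inclusion, I supply a compatible witness for each excluded state: states $(-\infty,j)\in S$ are compatible with the $(-\infty,\cdot)$ witness, states $(i,\infty)\in S$ are compatible with the $(\cdot,\infty)$ witness, and every $k\in n$ has every state of $X$ open to it via Definition \ref{PseudoSorites}.\ref{PseudoSoritesD}, so in particular each witness is open to $k$.

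The chief subtlety throughout is the asymmetry of $\comp$ on $S\cup n$: the clauses \ref{PseudoSorites}.\ref{PseudoSoritesC} and \ref{PseudoSorites}.\ref{PseudoSoritesD} do not treat the two directions alike, so in applying the negation clause at a state $x$ one must carefully quantify over $y$ with $y\comp x$, not $x\comp y$. Once this bookkeeping is kept straight, each verification is a routine check from the definitions.
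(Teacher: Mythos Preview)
Your approach is the same as the paper's: invoke Fact~\ref{EMExt} to compute the conjunction, use Fact~\ref{NegAtom2} to equate the excluded-middle and weak-excluded-middle versions, and then chase the $\comp$-relation directly for the negation. The reasoning is sound and in fact more careful than the paper's own argument.

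One point deserves to be made explicit rather than glossed over. Your case analysis for the first equality yields that $i=-\infty$ forces $j\leq 0$, hence $j=0$, so the surviving pair is $\{(n-1,\infty),(-\infty,0)\}$, \emph{not} $\{(n-1,\infty),(-\infty,n-1)\}$ as printed in the statement. Indeed $(-\infty,n-1)$ is not even in $\llbracket p_0\vee\neg p_0\rrbracket$ by Fact~\ref{EMExt}, since neither $0\leq -\infty$ nor $n-1\leq 0$. Your third-equality argument confirms this: you say compatibility with the $(-\infty,\cdot)$ witness requires $i$ strictly below its upper coordinate and that this fails given $0\leq i$---which only works if that upper coordinate is $0$, not $n-1$. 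With the witness $(-\infty,n-1)$ and any $(i,j)\in S$ with $j\leq n-1$, compatibility would in fact hold (since $i<j=\min(n-1,j)$), and the claimed description of $\llbracket\neg\psi\rrbracket$ would collapse. So you are silently correcting a typo in the statement; you should say so outright.
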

\begin{proof} We already know the formulas using excluded middle and weak excluded middle have the same extension by the second part of Fact \ref{NegAtom2}. 

Now if $x\in \llbracket(p_0\vee\neg p_0)\wedge\dots\wedge (p_{n-1}\vee\neg p_{n-1})\rrbracket^{\mathfrak{S}_{n,\delta}}$, then by Fact \ref{EMExt}, $x=(i,j)\in S$ and for every $k\in n$, $k\leq i$ or $j\leq k$. It follows that $i=n-1$ or $j=n-1$, which implies $x=(n-1,\infty)$ or $x=(-\infty,n-1)$ by Definition \ref{Sorites}.\ref{SoritesA}. For the second part, the only states $y$ for which neither $(n-1,\infty)\comp y$ nor $(-\infty,n-1)\comp y$ are the pairs of natural numbers. Hence $\llbracket \neg ((p_0\vee\neg p_0)\wedge\dots\wedge (p_{n-1}\vee\neg p_{n-1}))\rrbracket^{\mathfrak{S}_{n,\delta}}=\{(i,j)\in S \mid i,j\in\mathbb{N} \}$.\qed\end{proof}

\section{Discussion}\label{Discussion}

Let us now return to our initial question: what is a natural non-classical base logic to which to retreat in light of both the non-classicality emerging from epistemic modals and the non-classicality emerging from vagueness? We are simply taking for granted here the arguments against distributivity involving epistemic modals from \cite{Holliday-Mandelkern2022}. As we have seen in \S~\ref{SymmSorites}, dropping distributivity as in orthologic is also enough to render consistent the denial of sharp cutoffs in the Sorites. Thus, a temptingly economical view is that the key to dealing with both the non-classicality emerging from epistemic modals and the non-classicality emerging from vagueness is to deny distributivity and retreat to orthologic.

However, we have also seen in \S~\ref{PseudosymmSorites} a potential benefit of going still weaker than orthologic, down to fundamental logic. In the fundamental approach to the Sorites, as in Fine's approach, instances of excluded middle such as $p_k\vee\neg p_k$ have genuine expressive value (recall Fact \ref{EMExt}): rather than being empty tautologies, as in orthologic, in the fundamental approach they express that \textit{there is a fact of the matter} about whether a given person is young or not, whether they are bald or not, etc., a function for excluded middle highlighted by Field \cite{Field2003}. Thus, we have the ability to express for some questions---e.g., whether there is phosphene on Venus---that there is a fact of the matter ($phosphene\vee\neg phosphene$), while withholding such claims for other questions---e.g., whether Jerry Fallwell's life began on a particular nanosecond, to borrow Field's example. It is true that we still cannot assert $\neg (p\vee\neg p)$, which is inconsistent, but we are able to withhold assent from $p\vee\neg p$; and as Fine \cite[p.~39f]{Fine2020} stresses, we can even use negation to deny that for every item in a Sorites series, there is a fact of the matter about whether the relevant predicate applies (recall Fact~\ref{DenyLEMs}). 

From Fine's point of view, there would seem to be another objection to orthologic. In orthologic, $\neg (p_k\wedge\neg p_{k+1})$ is equivalent to $\neg p_k\vee p_{k+1}$, so orthologic must treat the conjunctive and disjunctive versions of the Sorites as equivalent. Yet Fine \cite[p.~737]{Fine2023} wants to treat them differently: for the disjunctive version, disjunctive syllogism is valid on Fine's approach, but he does not accept the Sorites premises of the form $\neg p_k\vee p_{k+1}$. By contrast, in orthologic, $\neg p_k\vee p_{k+1}$ is equivalent to $\neg (p_k\wedge\neg p_{k+1})$, which Fine accepts, but disjunctive syllogism is not valid in orthologic.\footnote{For example, in the epistemic orthologic of \cite{Holliday-Mandelkern2022}, one can accept $p\vee \Box \neg p$ (e.g., ``Either the cat is inside or he must be outside'') and $\Diamond p$ (``The cat might be inside''), without concluding $p$ (``The cat is inside'') (cf.~\cite[citing Yalcin]{KlinedinstRothschild2012}).} As in Fine's approach, in the fundamental approach of \S~\ref{PseudosymmSorites}, $\neg p_k\vee p_{k+1}$ is not equivalent to $\neg (p_k\wedge \neg p_{k+1})$; while the latter is forced at all states in $\mathfrak{S}_{n,\delta}$, $\neg p_0\vee p_1$ is not forced at, e.g., the state $(0,3)$.\footnote{This is because $1\comp (0,3)$, and for any $x\compflip 1$, $x\not\in \llbracket p_1\rrbracket^{\mathfrak{S}_{n,\delta}}$ and $x\not\in \llbracket \neg p_1\rrbracket^{\mathfrak{S}_{n,\delta}}$. Then since $x\in \llbracket \neg p_0\rrbracket^{\mathfrak{S}_{n,\delta}}$ implies $x\in \llbracket \neg p_1\rrbracket^{\mathfrak{S}_{n,\delta}}$ in $\mathfrak{S}_{n,\delta}$, we also have $x\not\in \llbracket \neg p_0\rrbracket^{\mathfrak{S}_{n,\delta}}$.}  Indeed, in $\mathfrak{S}_{n,\delta}$, a state forces $\neg p_k\vee p_{k+1}$ only if it forces $\neg p_k$ or forces $p_{k+1}$. Thus, although disjunctive syllogism is not schematically valid in fundamental logic, in the model $\mathfrak{S}_{n,\delta}$, any state that forces $p_k\wedge (\neg p_k\vee p_{k+1})$ also forces $p_{k+1}$.

Although we did not include a conditional in our language in this paper, if we were to add a conditional, this might provide another argument in favor of the fundamental approach over the orthological one. Suppose we accept that the ``Or-to-If'' inference from $\neg p_k\vee p_{k+1}$ to $p_k\to p_{k+1}$ preserves certainty.\footnote{Arguably the inference is not \textit{valid}, assuming the standard view that probability is monotonic with respect to logical consequence, as shown by the fact that one can rationally assign higher probability to $\neg p\vee q$ than to $p \to q$. To use an example from Bas van Fraassen, it is rational to assign `The die did not land on an odd number or it landed on 5' probability 4/6 and `If the die landed on an odd number, then it landed on 5' probability 1/3.} Then if we are certain there is no sharp cutoff, the orthological approach would commit us to certainty in $\neg p_k\vee p_{k+1}$ for each $k$, in which case Or-to-If would commit us to certainty in $p_k\to p_{k+1}$ for each $k$; hence we would have to deny modus ponens for $\to$ in order to block the derivation of $p_n$. Though there may be reasons to deny modus ponens for conditionals whose consequents contain epistemic modals or conditionals \cite{McGee1985}, being forced to deny modus ponens for simple conditionals---albeit with vague predicates---may seem more costly. Perhaps a proponent of the orthological approach to vagueness could escape this modus ponens problem by denying that we are certain that there are no sharp cutoffs, or by denying that Or-to-If preserves certainty when vague predicates are involved, inspired by Fine's rejection of the move  from $\neg (p_k\wedge\neg p_{k+1})$ to $p_k\to p_{k+1}$. Whichever path one chooses, the cost-benefit analysis seems likely to be quite subtle.

Let us suppose for the moment that one is convinced by some of the considerations above to give up the orthological principles that are rejected by the intuitionists, namely excluded middle and the inference from $\neg(\varphi\wedge \neg\psi)$ to $\neg\varphi\vee\psi$. Combined with the arguments against distributivity (or proof-by-cases with side assumptions) involving epistemic modals mentioned in \S~\ref{Intro}, this would lead us toward fundamental logic. Still, one could admit that certain principles that are not generally valid are safe when restricted to propositions of special types. For example, one could hold that formulas without epistemic modals express propositions in a special subalgebra of the ambient algebra of propositions; in \cite{Holliday-Mandelkern2022} this is a Boolean algebra, but if vague predicates are allowed, it could instead be a bounded distributive lattice with a weak pseudocomplementation, $(L,\neg)$. Then formulas containing neither epistemic modals nor vague predicates could be taken to express propositions  in a Boolean subalgebra of $(L,\neg)$.

In a sense, nothing is lost by moving from orthologic to fundamental logic. For the G\"{o}del-Gentzen translation from classical logic to intuitionistic logic \cite{Godel1933,Gentzen1936} is also a full and faithful embedding of orthologic into fundamental logic. Recall this is the translation $g$ defined by 
\begin{eqnarray*}
g(p)&=&\neg\neg p \\
g(\neg\varphi)&=&\neg g(\varphi)\\
g(\varphi\wedge\psi)&=&(g(\varphi)\wedge g(\psi))\\
g(\varphi\vee\psi)&=&g(\neg(\neg\varphi\wedge\neg\psi)).
\end{eqnarray*}
 
 \begin{proposition}[\cite{Holliday2023}] \textnormal{For all $\varphi,\psi\in\mathcal{L}$, we have $\varphi\vdash\psi$ in orthologic if and only if $g(\varphi)\vdash g(\psi)$ in fundamental logic.}
 \end{proposition}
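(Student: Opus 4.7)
The plan is to prove the two directions separately, using a syntactic approach for soundness of the translation and a semantic approach for faithfulness, with a common regularity lemma playing the key technical role in the syntactic direction.

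The crucial preparatory step is the following \emph{regularity lemma}: for every $\varphi\in\mathcal{L}$, we have $\neg\neg g(\varphi)\vdash_\mathsf{F} g(\varphi)$. To prove it, I would first observe that any negation $\neg\chi$ is regular in fundamental logic, since $\chi\vdash_\mathsf{F}\neg\neg\chi$ (item 6) gives $\neg\neg\neg\chi\vdash_\mathsf{F}\neg\chi$ by contraposition (item 11). Next I would show that a conjunction of regular formulas is regular: from $\alpha\wedge\beta\vdash\alpha$ by contraposition $\neg\alpha\vdash\neg(\alpha\wedge\beta)$, hence $\neg\neg(\alpha\wedge\beta)\vdash\neg\neg\alpha\vdash\alpha$ by regularity of $\alpha$, and similarly for $\beta$, so $\neg\neg(\alpha\wedge\beta)\vdash\alpha\wedge\beta$ by conjunction introduction (item 9). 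An easy induction on $\varphi$ then shows $g(\varphi)$ is regular: the cases $p$, $\neg\psi$, and $\psi\vee\chi$ all yield a negation by definition of $g$, while $\psi\wedge\chi$ yields a conjunction of formulas regular by induction.

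For the soundness direction ($\varphi\vdash_\mathsf{O}\psi$ implies $g(\varphi)\vdash_\mathsf{F} g(\psi)$), I would proceed by induction on orthologic derivations, verifying that $g$ sends each axiom and rule of orthologic to one derivable in fundamental logic. Since $g$ commutes with $\neg$ and $\wedge$, the conjunction axioms and contraposition translate directly. For disjunction introduction, e.g., $g(\varphi)\vdash\neg(\neg g(\varphi)\wedge\neg g(\psi))$ follows from $\neg g(\varphi)\wedge\neg g(\psi)\vdash\neg g(\varphi)$ by contraposition and $g(\varphi)\vdash\neg\neg g(\varphi)$. For disjunction elimination, given $g(\varphi)\vdash_\mathsf{F} g(\chi)$ and $g(\psi)\vdash_\mathsf{F} g(\chi)$, contraposition yields $\neg g(\chi)\vdash\neg g(\varphi)\wedge\neg g(\psi)$, so $\neg(\neg g(\varphi)\wedge\neg g(\psi))\vdash\neg\neg g(\chi)$, and the regularity lemma delivers $\neg\neg g(\chi)\vdash g(\chi)$. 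Finally, the orthologic-specific axiom of double negation elimination $\neg\neg\varphi\vdash\varphi$ translates precisely to $\neg\neg g(\varphi)\vdash g(\varphi)$, handled again by regularity.

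For the faithfulness direction, I would argue semantically using the completeness theorems already available. The key claim is that in any reflexive symmetric relational fixpoint model $\mathcal{M}$, we have $\llbracket g(\varphi)\rrbracket^{\mathcal{M}}=\llbracket\varphi\rrbracket^{\mathcal{M}}$ for all $\varphi$. This follows by induction on $\varphi$: the atomic case uses that in symmetric models $\llbracket\neg\neg p\rrbracket^{\mathcal{M}}=\llbracket p\rrbracket^{\mathcal{M}}$, the $\neg$ and $\wedge$ cases are immediate from the inductive hypothesis, and the $\vee$ case uses the de Morgan validity $\llbracket\neg(\neg\varphi\wedge\neg\psi)\rrbracket^{\mathcal{M}}=\llbracket\varphi\vee\psi\rrbracket^{\mathcal{M}}$ of orthologic. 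Now suppose $\varphi\nvdash_\mathsf{O}\psi$; by completeness of orthologic (Theorem~\ref{OrthoSoundComplete}) there is a reflexive symmetric relational fixpoint model $\mathcal{M}$ and state $x$ with $\mathcal{M},x\Vdash\varphi$ but $\mathcal{M},x\nVdash\psi$. By the lemma, $\mathcal{M},x\Vdash g(\varphi)$ and $\mathcal{M},x\nVdash g(\psi)$. Since every reflexive symmetric relation is pseudosymmetric, $\mathcal{M}$ is a fundamental logic model, so by soundness of fundamental logic (Theorem~\ref{FundamentalSoundComplete}), $g(\varphi)\nvdash_\mathsf{F} g(\psi)$.

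The main obstacle I expect is the disjunction-elimination case of the soundness direction, since this is where the absence of proof-by-cases with side assumptions (and the absence of pseudocomplementation) in fundamental logic bites: one cannot in general import orthologic-style reasoning about $\vee$, and only the fact that $g$ always produces regular formulas lets the argument go through. Once the regularity lemma is in place, all other cases are routine verifications using the intro-elim axioms and contraposition.
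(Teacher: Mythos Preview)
The paper does not actually prove this proposition; it simply cites it from \cite{Holliday2023}. So there is no in-paper proof to compare against, and your proposal must be assessed on its own merits.

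Your argument is correct. The regularity lemma is the right engine for the forward direction: every $g(\varphi)$ is (fundamental-logic-provably) equivalent to a negation or a conjunction of such, hence stable under double negation, and this is exactly what is needed to push the translated disjunction-elimination rule and the translated double-negation-elimination axiom through fundamental logic. Your case analysis of the orthologic axioms and rules is complete and each step is a routine application of items 1--11. For the converse, the semantic route via Theorems~\ref{OrthoSoundComplete} and \ref{FundamentalSoundComplete} is clean: the identity $\llbracket g(\varphi)\rrbracket^{\mathcal{M}}=\llbracket\varphi\rrbracket^{\mathcal{M}}$ over reflexive symmetric fixpoint models follows because such models validate double negation elimination and the de~Morgan law $\neg(\neg\alpha\wedge\neg\beta)\equiv\alpha\vee\beta$ (both consequences of the ortholattice structure of the fixpoints of $c_\between$), and symmetry trivially implies pseudosymmetry, so any orthologic countermodel is automatically a fundamental-logic countermodel to the translated sequent.

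One small point worth making explicit in a write-up: the faithfulness direction can also be done purely syntactically, by observing that $\varphi\dashv\vdash_\mathsf{O} g(\varphi)$ in orthologic (an easy induction using double negation elimination and de~Morgan) and that fundamental logic is a sublogic of orthologic; your semantic detour is equally valid but relies on the completeness theorems, which are nontrivial external inputs.
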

 \noindent Thus, all orthological reasoning can be carried out inside fundamental logic,\footnote{In fact, all classical propositional reasoning can also be carried out inside fundamental logic at the expense of an exponential blowup in the ``translation''; see Proposition~2.4 of \cite{Holliday2024}.} while fundamental logic has the advantage that orthological tautologies such as excluded middle have genuine expressive value---and hence withholding assent from them also has genuine expressive value.
 
Might some other sublogic of orthologic and compatibility logic be appropriate for handling the non-classicality coming from epistemic modals and vagueness? Consider, for example, the \textit{intersection} of orthologic and compatibility logic: $\varphi\vdash\psi$ in the intersection logic iff $\varphi\vdash\psi$ in both orthologic and compatibility logic. This logic is stronger than fundamental logic, as it includes principles such as $p\wedge (q\vee r)\vdash (p\vee\neg p)\vee ((p\wedge q)\vee (p\wedge r))$, since the left disjunct of the conclusion follows in orthologic and the right disjunct follows in compatibility logic. However, it is doubtful that there is a well-motivated proof theory or natural semantics for this intersection logic. By contrast, fundamental logic has a well-motivated Fitch-style proof theory based on introduction and elimination rules \cite{Holliday2023}, as well as a sequent calculus \cite{Aguilera2022}, and a natural semantics using reflexive and pseudosymmetric frames. Until a similarly natural sublogic of orthologic and compatibility logic appears, it appears that fundamental logic is the most natural sublogic of orthologic and compatibility logic to consider for accommodating the non-classicality coming from epistemic modals and from vagueness.

\section{Conclusion}\label{Conclusion}

It is a tantalizing possibility that the system of orthologic first discovered in the context of quantum mechanics \cite{Birkhoff1936} could solve both the puzzle of epistemic modals \cite{Holliday-Mandelkern2022} and the paradox of the Sorites. Whether vagueness also provides motivation to go weaker than orthologic, perhaps down to the system of fundamental logic, is a question we have discussed but not settled. Of course, there are other motivations for such a weakening, coming from the tradition of constructive mathematics \cite{Troelstra1988a}. In fact, some reasons for rejecting excluded middle in mathematical contexts are not entirely unrelated to reasons having to do with vagueness. For example, Feferman \cite{Feferman2011} rejects the instance of excluded middle for the Continuum Hypothesis on the grounds that ``the concept of the totality of arbitrary subsets of A is essentially underdetermined or vague'' (p.~21).

Here we have drawn inspiration from Fine's \cite{Fine2020} non-classical approach to vagueness. There are of course other approaches to vagueness that attempt to salvage more of classical logic, such as supervaluationist and epistemicist approaches (see \cite{Williamson1994}), or at least to hold the line at intuitionistic logic \cite{Wright2001,Bobzien2020}. The orthological and fundamental approaches to vagueness should be systematically compared to these other approaches. But those who accept (\ref{SoritesForm}) concerning the Sorites Paradox cannot accept the classical or even intuitionistic views of $\wedge$ and~$\neg$. We hope the models discussed in this paper render more intelligible the consistency of (\ref{SoritesForm}) according to alternative views of the connectives.

\subsection*{Acknowledgements}
I thank the two reviewers for valuable comments, as well as Ahmee Christensen, Matt Mandelkern, and Guillaume Massas for helpful discussion. I also thank the organizers of the 4th Tsinghua Interdisciplinary Workshop on Logic, Language, and Meaning for the invitation to speak on the topic of The Connectives in Logic and Language, which prompted this paper. 

%
%
\bibliographystyle{splncs04}
\bibliography{fundamental}

\end{document}